\newcommand{\commentout}[1]{}
\newtheorem{theorem}{Theorem}
\newtheorem{lemma}{Lemma}
\newtheorem{proposition}{Proposition}
\renewcommand{\qed}{\rule{5pt}{5pt}}
\newcommand{\cL}{\mathcal{L}}
\newcommand{\eps}{\varepsilon}
\newcommand{\abs}[1]{\left|#1\right|}
\newcommand{\E}[2]{\mathbb{E}_{#1} \! \left[#2\right]}
\newcommand{\normw}[2]{\left\|#1\right\|_{#2}}
\tikzstyle{block} = [rectangle, draw, text width=5em, text centered, rounded corners, minimum height=4em]
\tikzstyle{inv} = [rectangle, text width=5em, text centered, minimum height=4em]
\tikzstyle{line} = [draw, -latex']
\begin{document}

\title{Managing your Private and Public Data: \\
Bringing down Inference Attacks against your Privacy}

\author{
\IEEEauthorblockN{Salman Salamatian\IEEEauthorrefmark{1}, Amy Zhang\IEEEauthorrefmark{2}, Flavio du Pin Calmon\IEEEauthorrefmark{3}, Sandilya Bhamidipati\IEEEauthorrefmark{2},\\
Nadia Fawaz\IEEEauthorrefmark{2}, Branislav Kveton\IEEEauthorrefmark{2}, Pedro Oliveira\IEEEauthorrefmark{2}, Nina Taft\IEEEauthorrefmark{2}\\
}
\IEEEauthorblockA{\IEEEauthorrefmark{1}
EPFL,
Lausanne, Switzerland,
Email: salman.salamatian@epfl.ch\\}
\IEEEauthorblockA{\IEEEauthorrefmark{2}
Technicolor,
Palo Alto, CA 94301,\\
Email: \{amy.zhang, sandilya.bhamidipati, nadia.fawaz, branislav.kveton, nina.taft\}@technicolor.com\\}
\IEEEauthorblockA{\IEEEauthorrefmark{3}
MIT,
Cambridge, MA 02138,
Email: flavio@mit.edu\\}
}

\IEEEoverridecommandlockouts
\IEEEpubid{\makebox[\columnwidth]{Parts of this technical report were presented in IEEE GlobalSIP 2013 \cite{salamatian2013globalsip}} \hspace{\columnsep}\makebox[\columnwidth]{ }}

\maketitle

\begin{abstract}

We propose a practical methodology to protect a user's private data, when he wishes to publicly release data that is correlated with his private data, in the hope of getting some utility.
Our approach relies on a general statistical inference framework that captures the privacy threat under inference attacks, given utility constraints. Under this framework, data is distorted before it is released, according to a privacy-preserving probabilistic  mapping. This mapping is obtained by solving a convex optimization problem, which minimizes information leakage under a distortion constraint.
We address practical challenges encountered when applying this theoretical framework to real world data. On one hand, the design of optimal privacy-preserving mechanisms requires knowledge of the prior distribution linking private data and data to be released, which is often unavailable in practice. On the other hand, the optimization may become untractable and face scalability issues when data assumes values in large size alphabets, or is high dimensional. Our work makes three major contributions.
First, we provide bounds on the impact on the privacy-utility tradeoff of a mismatched prior.
Second, we show how to reduce the optimization size by introducing a quantization step, and how to generate privacy mappings under quantization. 
Third, we evaluate our method on three datasets, including a new dataset that we collected, showing correlations between political convictions and TV viewing habits. We demonstrate that good privacy properties can be achieved with limited distortion so as not to undermine the original purpose of the publicly released data, e.g. recommendations.
\end{abstract} 
\section{Introduction}\label{sec:Introduction}

In recent years, the many dangers of online privacy abuse have surfaced, including identity theft, reputation loss, job
loss, discrimination, harassment, cyberbullying, stalking and even suicide \cite{cyberbullying,suicide,reputation}.  During the same time, many highly visible privacy lawsuits have burst on the scene that typically accuse online social network (OSN) providers of not properly informing users about what their data is used for and whom else gets access to it.  We have seen lawsuits on illegal data collection \cite{googlesuit1}, sharing data without user consent  \cite{googlebuzz}, changing privacy settings without informing users \cite{facebooksuit1}, misleading users about tracking their browsing behavior \cite{fbtracking}, not carrying out user deletion actions \cite{europe-vs-facebook}, and more. The potential cost of losing these law suits is rising into the tens and hundreds of millions of dollars \cite{FBcost,GoogleFine}. These events beg for academics to focus more on bridging the divide between theoretical privacy and practical issues of implementation.

One of the central problems of managing privacy in the Internet lies in the
simultaneous management of both public and private data. Many users are willing
to release \emph{some} data about themselves, such as their movie watching
history or their gender; they do so because such data enables useful services
and because such attributes are rarely considered private. However users also
have other data they consider private, such as income level, political
affiliation, or medical conditions. In this work, we focus on a method in
which a user can release her public data, but is able to prevent against
inference attacks that may learn her private data from the public information. Our solution consists of a privacy-preserving mapping, which informs a user on
how to distort her public data, before releasing it, such that no inference
attacks can successfully learn her private data. At the same time, the
distortion should be bounded so that the original service (such as a
recommendation) can continue to be useful.

In this paper we adopt the privacy framework presented in
\cite{alerton2012privacy}. This general framework considers the privacy threat incurred by a user
when a passive adversary attempts to infer the user's private information from
the user's public (released) data. The privacy loss is measured in terms of an inference
cost gain that the adversary has by observing the released data. The goal of the
framework is to determine a mapping of the public data to a new set of
outputs given certain distortion (utility) constraints. The authors in \cite{alerton2012privacy}
formulate the problem of determining this mapping  for a
general inference cost function as a convex program. Without
significant loss of generality, \cite{alerton2012privacy} argues that the
privacy loss can be measured in terms of mutual information, which leads to an
optimization formulation similar to the one found in rate-distortion theory. This formulation,
albeit general and theoretically sound, faces a number of practical challenges
when applied to actual datasets available within web services.
\IEEEpubidadjcol
The first challenge is that this method relies on knowing a joint probability distribution between the private and public data, called the \emph{prior}. Often the true prior distribution is not available and instead only a limited set of samples of the private and public data can be observed. This leads to the \emph{mismatched prior} problem. We seek to provide a meaningful distortion and bring privacy even in the face of a mismatched prior. Our first contribution centers around this. Starting with the set of observable data samples, we find an improved estimate of the prior, based on which the privacy-preserving mapping is derived. We develop some bounds on any additional distortion this process incurs to guarantee a given level of privacy. More precisely, we show that the private information leakage increases log-linearly with the $\cL_1$-norm distance between our estimate and the prior; that the distortion rate increases linearly with the $\cL_1$-norm distance between our estimate and the prior; and that the $\cL_1$-norm distance between our estimate and the prior decreases as the sample size increases.

The second challenge is one of scalability that occurs when the size of the underlying alphabet of the user data is very large, e.g. due to a large number of features representing the data. To handle this, we propose a quantization approach that limits the dimensionality of the problem. We preprocess and quantize the original data by clustering it. We then determine how to distort the data in the space defined by the clusters. The privacy-preserving mapping is computed using a convex solver that minimizes privacy leakage subject to a distortion constraint. The advantage of our quantization scheme is that it is computationally efficient - we reduce the number of optimized variables from being quadratic in the size of the underlying feature alphabet to being quadratic in the number of clusters, and thus make the optimization independent of the number of observable data samples. For some real world examples, this can lead to orders of magnitude reduction in dimensionality. We also show that any additional distortion introduced by quantization increases linearly with the maximum distance between a sample datapoint and the closest cluster center. This quantization step, our second contribution, provides a fundamental extension to the original method in \cite{alerton2012privacy} which sometimes can not easily be applied in practice when the data is too high dimensional.

Our third area of contribution centers around evaluations. In \cite{alerton2012privacy} the authors only proposed and reasoned about their framework but did not evaluate it. To the best of our knowledge, our paper is the first evaluation of this method. We evaluate our methods on 3 datasets, 2 well known datasets and one new dataset that we collected ourselves. This latter dataset is one that contains users TV show ratings and their political affiliation. The Simmons National Consumer Survey organization in the US has shown that correlations exist between political affiliation and opinions about TV shows \cite{Experian-survey}. In this case study, we consider TV show opinions to be data to be released and a user's political affiliation to be kept private. The general framework of privacy against statistical inference \cite{alerton2012privacy} allows for different kinds of instantiations of data distortions before public release. For example, our system might suggest to a user to simply remove an element of their public data (called \emph{erasure-distortions}), or may suggest to alter the contents of some elements in a public profile (called \emph{exchange-distortions}), or other forms of distortion.

Our evaluations demonstrate multiple things. First, even when we do not have a fully specified prior distribution on the public and private distribution, we can still provide privacy. We show that we can provide privacy in this difficult environment at the extra cost of a small amount of additional distortion in the public data. Second, we illustrate that our quantization approach works well, namely that it is possible to provide good privacy even when quantization is needed to reduce the dimensionality of the data. Third, we show that in our Politics-and-TV dataset, perfect privacy can be achieved with a 15\% distortion of the original public data. In practice less than 15\% distortion could provide sufficient privacy. We also illustrate examples of specific distortions (changes to particular public data profiles) and show these are intuitively reasonable, yet not trivial.

The rest of this paper is organized as follows. In the next section we formally define the problem at hand. In Section~\ref{sec:mismatchedPrior} we provide bounds on the privacy-distortion tradeoff when the mismatched prior problem surfaces. In Section~\ref{sec:highDim}, we explain our method of quantization to address scalability challenges. Our datasets are described in Section~\ref{sec:dataSets}, and the results of our evaluation are provided in Section~\ref{sec:Results}. We discuss related work in Section~\ref{sec:relatedWork}, and conclude in Section~\ref{sec:Conclusion}.

\section{Problem Statement}\label{sec:Problem}

In this section, we define the threat model, and describe the privacy-accuracy framework considered in this paper. Then, we point out two challenges encountered when applying this framework in practice, and outline our approaches to address these challenges. These approaches are treated in more details in Sections~\ref{sec:mismatchedPrior} and~\ref{sec:highDim}.

\subsection{Notations}

We denote by $\rm{Simplex}$ the probability simplex defined by $\sum_{x} p(x) = 1, \; \; p(x) \geq 0 \;\forall x $.
Let $A\in \mathcal{A}$ and $B\in \mathcal{B}$ be random vectors taking values in the alphabets $\mathcal{A}$ and $\mathcal{B}$ respectively. The joint probability distribution of the elements of $A$ and $B$ is denoted $p_{A,B}: \mathcal{A} \times \mathcal{B} \rightarrow [0,1]$. The marginal distribution of vector $A$ is defined by $p_A (a) = \sum_{b \in \mathcal{B}} p_{A,B} (a,b)  \; \; \forall a \in \mathcal{A}$, while the conditional distribution of $A$ given $B$ is given by $p_{A | B} (a | b) = \frac{p_{A,B} (a,b)}{p_B(b)}$, where support issues are handled accordingly. We may drop the subscripts when the interpretation is clear from the context.

We recall the definitions of the entropy $H(A)$ of a random vector $A$, and of the mutual information $I(A;B)$ of vectors $A$ and $B$:
\begin{align}
\label{eq:entropy}
& H(A)= - \sum_{a \in \mathcal{A}} p_A(a) \log(p_A(a)) \nonumber \\
& I(A;B)= \sum_{a \in \mathcal{A}, b \in \mathcal{B}} p_{A,B}(a,b) \log \left(
 \frac{p_{A,B}(a,b)}{p_{A}(a) p_B(b)}\right).
\end{align}
Note that $H(A)$ depends on the distribution of $A$ only, while $I(A;B)$ depends only on the joint distribution $p_{A,B}$ of $A$ and $B$, since the marginals $p_A$ and $p_B$ can be obtained from the joint distribution.

\subsection{Threat Model}\label{sec:ThreatModel}

We consider the setting described in \cite{alerton2012privacy},  where a user has two types of data: some data that he would like to remain private, e.g. his income level, his political views, etc.,  and some data that he is willing to release publicly and from which he will derive some utility, for example the release of his media preferences to a service provider would allow the user to receive content recommendations.
We denote by $A \in \mathcal{A}$ the vector of personal attributes that the user wants to keep private, and by $B \in \mathcal{B}$ the vector of data he is willing to make public, where $\mathcal{A}$ and $\mathcal{B}$ are the sets from which $A$ and $B$ can assume values.

We assume that the user private attributes $A$ are linked to his data $B$ by the joint probability distribution $p_{A,B}$. Thus, an adversary who would observe $B$ could infer some information about $A$ from $B$.

To reduce this inference threat, instead of releasing $B$, the user will release a \emph{distorted version} of $B$, denoted $\hat{B} \in \mathcal{\hat{B}}$, generated according to a conditional probabilistic mapping $p_{\hat{B}|B}$, called the \emph{privacy-preserving mapping}. Note that the set $\mathcal{\hat{B}}$ may differ from the set $\mathcal{B}$.

The privacy-preserving mapping $p_{\hat{B}|B}$ should be designed in such a way that it renders any statistical inference of $A$ based on the observation of $\hat{B}$ harder, yet, at the same time, preserves some utility to the released data $\hat{B}$, by limiting the distortion generated by the mapping. This can be modeled by a constraint $\Delta \geq 0$ on the average distortion:
\begin{equation}\label{eq:distortionConstraint}
E_{B,\hat{B}}[d(B,\hat{B})] \leq \Delta,
\end{equation}
for some distortion metric $d:\mathcal{B} \times \mathcal{\hat{B}} \rightarrow \mathds{R}^+$. It should be noted that any distortion metric can be used, such as the Hamming distance if $B$ and $\hat{B}$ are binary vectors, or the $l_2$-norm if $B$ and $\hat{B}$ are real vectors, or even more complex metrics modeling the variation in utility that a user would derive from the release of $\hat{B}$ instead of $B$. The latter could, for example, represent the difference in the quality of content recommended to the user based on the release of his distorted media preferences $\hat{B}$ instead of his true preferences~$B$.


We now formalize the privacy threat model. 
We assume the following standard statistical inference threat model \cite{alerton2012privacy}: 
the adversary chooses a belief on the data $A$, modeled by a probability distribution $q:\mathcal{A}\rightarrow [0,1]$. The belief $q$ is obtained by minimizing an expected cost function $C(A,q)$. 
In particular, prior to observing $\hat{B}$, the adversary chooses his inference method (i.e. belief) $q$ as the solution of the minimization
$$
c^*_0 = \min_{q} E_A[C(A,q)].
$$
After observing $\hat{B}$, the adversary updates his inference method $q$ such that it minimizes
$$
c^*_{\hat{b}} = \min_{q} E_{A|\hat{B}}[C(A,q)|\hat{B}=\hat{b}].
$$
The average cost gain by the adversary after observing the public release $\hat{B}$ is the difference
$$
\Delta C = c^*_0 - E_{\hat{B}} [c^*_{\hat{b}}].
$$
This average cost gain represents how much an adversary gains in term of inference of the private attributes $A$ thanks to the observation of $\hat{B}$. The goal of the privacy-preserving mapping will be to minimize this gain. In the particular case of \emph{perfect privacy} $\Delta C=0$, the released data $\hat{B}$ does not provide any information that is helpful for the inference of $A$, and the inference cannot outperform an uninformed guess. Note that this general framework does not assume a particular inference algorithm.

If an adversary uses the log-loss\footnote{For a justification of the relevance and generality of the log-loss cost, we refer the reader to \cite[Section~IV.A]{alerton2012privacy}.} cost function $C(A,q)=-\log(q_{A})$, it can easily be shown \cite{alerton2012privacy} that
\begin{equation}
\Delta C = I(A;\hat{B}).
\end{equation}
Hence, the privacy leakage is captured by the mutual information between the private attributes $A$ and the publicly released data $\hat{B}$. It should be noted that in the case of perfect privacy $(I(A;\hat{B})=0)$, the privacy-preserving mapping $p_{\hat{B}|B}$ renders the released data $\hat{B}$ statistically independent from the private data $A$.

It should be mentioned that, although we model the privacy threat using the average cost gain $\Delta C$ in this paper, Calmon and Fawaz \cite{alerton2012privacy} also proposed a worst-case model $\Delta C^*= c_0^*-\min_{\hat{b}\in \mathcal{B}} c^*_{\hat{b}}$, where the privacy threat is measured in terms of the most informative output, i.e. the output that gives the largest gain in cost.  We would like to point out that in the case of perfect privacy under the log-loss, the average threat model $\Delta C=0$ and the worst-case threat model $\Delta C^*=0$ are equivalent. Thus conclusions drawn on distortion to achieve perfect privacy under the average threat model also hold for the worst-case model. In general, the worst-case threat is an upperbound on the average threat, and its analysis and application are the object of some of our ongoing work.

\subsection{Privacy-Accuracy Framework}\label{sec:PrivacyAccuracy}

\begin{algorithm}[t]
  \caption{Privacy preserving mapping.}
  \label{alg:PPM}
  \begin{algorithmic}
    \State {\bf Input:} prior $p_{A, B}$
    \State
    \State solve the problem for $p_{\hat{B} | B}$:
\begin{align*}
  \underset{p_{\hat{B} | B}}{\text{minimize}} & \quad
  J(p_{A, B}, p_{\hat{B} | B}) \\
  \text{subject to} & \quad
  \E{p_{B, \hat{B}}}{d(B, \hat{B})} \leq \Delta \\
  & \quad p_{\hat{B} | B} \in \rm{Simplex}
\end{align*}
    \State
    \State {\bf Output:} mapping $p_{\hat{B} | B}$
  \end{algorithmic}
\end{algorithm}

In this section, we describe how the privacy-preserving mapping is designed to address the inference privacy threat, under a constraint on the distortion.

The mutual information $I(A;\hat{B})$ is a function of the joint distribution $p_{A,\hat{B}}$, which in turn depends on both the prior distribution $p_{A,B}$ and the privacy-preserving mapping $p_{\hat{B}|B}$. Indeed, $A \rightarrow B \rightarrow \hat{B}$ form a Markov chain, thus
\begin{align}\label{eq:distribution}
p_{A,\hat{B}}(a, \hat{b}) = & \sum_{b \in \mathcal{B}} p_{\hat{B}|B}(\hat{b} | b) p_{A,B}(a, b),\nonumber\\
p_{\hat{B}} (\hat{b}) = & \sum_{b \in \mathcal{B}} p_{\hat{B}|B} (\hat{b} | b) p_{B}(b),
\end{align}
and using Eq. (\ref{eq:distribution}) in the definition of $I(A;\hat{B}) $, we can write
\begin{equation}
\begin{split}
I(A;\hat{B})
&=\sum_{a, b,\hat{b}} p_{A,B}(a,b)p_{\hat{B}|B}(\hat{b}|b) \log \frac{\sum_{b"} p(\hat{b}|b") p(b"|a)}{\sum_{a',b'} p(\hat{b}|b') p(a',b')}.
\end{split}
\end{equation}
To stress the dependency of the privacy leakage on the prior distribution and the privacy-preserving mapping, we will denote $$I(A;\hat{B})=J(p_{A,B}, p_{\hat{B}|B}).$$
Similarly, the average distortion $E_{B,\hat{B}}[d(B,\hat{B})]$ is a function of the joint distribution $p_{B,\hat{B}}$, which in turn depends both on the prior distribution $p_{A,B}$, through the marginal $p_B$, and on the privacy-preserving mapping $p_{\hat{B}|B}$.

Consequently, given a prior distribution $p_{A,B}$ linking the private attributes $A$ and the data $B$, the privacy-preserving mapping $p_{\hat{B}|B}$ minimizing the privacy leakage subject to a distortion constraint is obtained as the solution to the optimization problem
\begin{equation}
	\begin{aligned}
	& \underset{p_{\hat{B}|B}}{\text{minimize}}
	& &  J(p_{A,B},p_{\hat{B}|B})\\
	& \text{subject to}
	& & E_{B,\hat{B}}[d(B,\hat{B})] \leq \Delta \\
	& & & p_{\hat{B}|B} \in \rm{Simplex}.
	\end{aligned}
	\label{eq: convex opt}
\end{equation}
It was shown in \cite{alerton2012privacy} that this problem is convex, and can thus be efficiently solved using standard algorithms. Note that this problem bears some resemblance with a modified rate distortion problem. This optimization is summarized in Algorithm~\ref{alg:PPM}.

\subsection{Practical Challenges}

In this section, we describe two practical challenges encountered when applying the theoretical privacy-accuracy framework described in Section~\ref{sec:PrivacyAccuracy}.

\textbf{Mismatched prior}: Finding the privacy-preserving mapping as the solution to the convex optimization in Algorithm~\ref{alg:PPM} relies on the fundamental assumption that the prior distribution $p_{A,B}$ that links private attributes $A$ and data $B$ is known and can be fed as an input to the algorithm.
In practice, the true prior distribution may not be known, but may rather be estimated from a set of sample data that can be observed, for example from a set of users who do not have privacy concerns and publicly release both their attributes $A$ and their original data $B$. The prior estimated based on this set of samples from non-private users is then used to design the privacy-preserving mechanism that will be applied to new users, who are concerned about their privacy. In practice, there may exist a mismatch between the estimated prior and the true prior, due for example to a small number of observable samples, or to the incompleteness of the observable data.
In Section~\ref{sec:mismatchedPrior}, we characterize the actual privacy-accuracy tradeoff that results from first running Algorithm~\ref{alg:PPM} with a mismatched prior as input, and then using the so-obtained privacy-preserving mapping, instead of the mapping that would have been obtained under the knowledge of the true prior. \\

\textbf{Large Data}: Designing the privacy-preserving mapping $p_{\hat{B}|B}$ requires characterizing the value of $p_{\hat{B}|B}(\hat{b}|b)$ for all possible pairs $(b,\hat{b})\in \mathcal{B} \times \mathcal{\hat{B}}$, i.e. solving the convex optimization problem over $|\mathcal{B}| |\mathcal{\hat{B}}|$ variables. When $\mathcal{\hat{B}}=\mathcal{B}$, and the size of the alphabet $|\mathcal{B}|$ is large, solving the convex optimization over $|\mathcal{B}|^2$ variables may become intractable.
In Section~\ref{sec:highDim}, we propose a method based on quantization to reduce the number of optimization variables. We show that this method to reduce complexity does not affect the privacy levels that can be achieved,  but comes at the expense of a limited amount of additional distortion, that we characterize.

\section{Privacy in the face of mismatched prior}\label{sec:mismatchedPrior}
\label{sec:mismatch}

Suppose that we do not have  perfect knowledge of the true prior distribution $p_{A, B}$ but that we have its estimate $q_{A, B}$. Then, if $q_{A, B}$ is a good estimate of $p_{A, B}$, the solution $p_{\hat{B} | B}^\ast$ obtained by feeding the mismatched distribution $q_{A, B}$ as an input to the optimization problem~\eqref{eq: convex opt} should be close to the one with $p_{A, B}$. In particular, the information leakage $J(q_{A, B}, p_{\hat{B} | B}^\ast)$ and distortion due to the mapping $p_{\hat{B} | B}^\ast$,  with respect to the mismatched prior $q_{A, B}$ should be similar to the actual leakage $J(p_{A, B}, p_{\hat{B} | B}^\ast)$ and distortion with respect to the true prior $p_{A, B}$. This claim is formalized in the following theorem.

\begin{theorem}
\label{thm:prior mismatch}
Let $p_{\hat{B} | B}^\ast$ be a solution to the optimization problem~\eqref{eq: convex opt} with $q_{A, B}$. Then:
\begin{align*}
  & \abs{J(p_{A, B}, p_{\hat{B} | B}^\ast) - J(q_{A, B}, p_{\hat{B} | B}^\ast)} \\
  & \quad \leq 3 \normw{p_{A, B} - q_{A, B}}{1} \log
  \frac{\abs{\mathcal{A}} \abs{\mathcal{B}}}{\normw{p_{A, B} - q_{A, B}}{1}} \\
  & \E{p_{\hat{B}, B}}{d(\hat{B}, B)} \leq \Delta + d_\mathrm{max} \normw{p_{A, B} - q_{A, B}}{1}
\end{align*}
where $d_\mathrm{max} = \max_{\hat{b}, b} d(\hat{b}, b)$ is the maximum distance in the feature space.
\end{theorem}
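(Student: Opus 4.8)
The plan is to prove the two inequalities separately, since the distortion bound is elementary while the leakage bound rests on a quantitative continuity property of entropy. Throughout write $\delta := \normw{p_{A, B} - q_{A, B}}{1}$. For the distortion bound, the key observation is that the average distortion depends on the prior only through the marginal on $B$: $\E{p_{\hat{B}, B}}{d(\hat{B}, B)} = \sum_{b, \hat{b}} p_B(b)\, p_{\hat{B}|B}^\ast(\hat{b}\,|\,b)\, d(b, \hat{b})$. Subtracting the analogous quantity computed under $q_B$, using $\sum_{\hat b} p_{\hat{B}|B}^\ast(\hat b\,|\,b) = 1$ and $d(b,\hat b)\le d_\mathrm{max}$, the difference is at most $d_\mathrm{max}\normw{p_B - q_B}{1}$. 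Since $p_{\hat{B}|B}^\ast$ is feasible for problem~\eqref{eq: convex opt} run with $q_{A,B}$, it satisfies $\E{q_{\hat{B}, B}}{d(\hat{B},B)} \le \Delta$, and since marginalization cannot increase the $\cL_1$ distance we have $\normw{p_B - q_B}{1} \le \delta$. Chaining these gives the stated distortion bound; this part is routine.

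For the leakage bound, I would first note that the fixed mapping $p_{\hat{B}|B}^\ast$ turns the two priors into two induced joint laws $p_{A,\hat{B}}$ and $q_{A,\hat{B}}$ on $(A,\hat B)$ via the Markov relation $A\to B\to\hat B$ of Eq.~\eqref{eq:distribution}, and that the $\cL_1$ distance contracts under this channel: expanding $p_{A,\hat B}(a,\hat b) - q_{A,\hat B}(a,\hat b) = \sum_b p_{\hat{B}|B}^\ast(\hat b\,|\,b)(p_{A,B}(a,b)-q_{A,B}(a,b))$ and using the triangle inequality together with the row-stochasticity $\sum_{\hat b} p_{\hat{B}|B}^\ast(\hat b\,|\,b)=1$ gives $\normw{p_{A,\hat B} - q_{A,\hat B}}{1} \le \delta$. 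The same argument applied to the marginals yields $\normw{p_A - q_A}{1}\le\delta$ and $\normw{p_{\hat B} - q_{\hat B}}{1}\le\delta$. Then I would write $J = I(A;\hat B) = H(A) + H(\hat B) - H(A,\hat B)$ under each prior and apply the triangle inequality to reduce $\abs{J(p_{A,B},p_{\hat{B}|B}^\ast) - J(q_{A,B},p_{\hat{B}|B}^\ast)}$ to the sum of three entropy differences, one each on the alphabets $\mathcal{A}$, $\hat{\mathcal{B}}$, and $\mathcal{A}\times\hat{\mathcal{B}}$.

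The crux, and the step I expect to be the main obstacle, is a quantitative continuity estimate for Shannon entropy: for distributions $P,Q$ on a finite alphabet of size $N$ with $\normw{P-Q}{1}=\theta$ in the small-$\theta$ regime, $\abs{H(P)-H(Q)} \le \theta\log(N/\theta)$. Establishing this cleanly amounts to controlling the modulus of continuity of $t\mapsto -t\log t$ via its concavity; the delicate points are restricting to the regime where $\theta\log(N/\theta)$ is increasing in $\theta$ (so I may replace each $\theta\le\delta$ by $\delta$) and checking the bound is monotone in $N$ (so I may enlarge every alphabet size to the common value $\abs{\mathcal{A}}\abs{\mathcal{B}}$, using $\hat{\mathcal{B}}=\mathcal{B}$ so that $\abs{\mathcal{A}}$, $\abs{\hat{\mathcal{B}}}$ and $\abs{\mathcal{A}}\abs{\hat{\mathcal{B}}}$ are all at most $\abs{\mathcal{A}}\abs{\mathcal{B}}$). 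I would either invoke this as a standard lemma (the entropy-continuity estimate of Csisz\'ar--K\"orner) or prove it directly. Applying it to each of the three entropy differences with $N\le\abs{\mathcal{A}}\abs{\mathcal{B}}$ and $\theta\le\delta$, then summing, produces exactly the factor $3$ and the argument $\abs{\mathcal{A}}\abs{\mathcal{B}}/\delta$, which completes the leakage bound.
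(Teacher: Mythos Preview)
Your proposal is correct and follows essentially the same route as the paper: the distortion bound via $\normw{p_B-q_B}{1}\le\delta$ and feasibility, and the leakage bound via the decomposition $I(A;\hat B)=H(A)+H(\hat B)-H(A,\hat B)$, the $\cL_1$-contraction through the channel $p_{\hat B|B}^\ast$, and the entropy-continuity lemma (which the paper quotes from Cover~\cite[Thm~17.3.3]{Cover}). You are in fact slightly more careful than the paper in spelling out the monotonicity of $\theta\mapsto\theta\log(N/\theta)$ and $N\mapsto\theta\log(N/\theta)$ needed to replace each $\theta\le\delta$ by $\delta$ and each alphabet size by $\abs{\mathcal A}\abs{\mathcal B}$.
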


The following lemma \cite{Cover}, which bounds the difference in the entropies of two distributions,  will be useful in the proof of Theorem~\ref{thm:prior mismatch}.
\begin{lemma}[{\cite[Thm~17.3.3]{Cover}}]
\label{lemma}
Let $p$ and $q$ be distributions with the same support $\mathcal{X}$ such that $\normw{p - q}{1} \leq \frac{1}{2}$. Then:
\begin{align*}
  \abs{H(p) - H(q)} \leq \normw{p - q}{1} \log \frac{\abs{\mathcal{X}}}{\normw{p - q}{1}}.
\end{align*}
\end{lemma}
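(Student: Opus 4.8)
The plan is to prove the bound pointwise over the alphabet $\cX$ and then sum, using the concavity of the scalar function $\phi(t) = -t \log t$ on $[0,1]$ as the main engine. Writing $H(p) - H(q) = \sum_{x \in \cX} \bigl(\phi(p(x)) - \phi(q(x))\bigr)$ and applying the triangle inequality reduces the claim to controlling each $\abs{\phi(p(x)) - \phi(q(x))}$ in terms of $\delta_x := \abs{p(x) - q(x)}$.

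The first key step is a scalar sub-lemma: for all $a, b \in [0,1]$ with $\abs{a - b} \leq \tfrac12$, one has $\abs{\phi(a) - \phi(b)} \leq \phi(\abs{a-b})$. I would prove this by fixing $\delta = \abs{a-b} \le \tfrac12$ and analyzing $g(a) = \phi(a+\delta) - \phi(a)$ on $[0, 1-\delta]$. Since $\phi$ is concave, $\phi'$ is decreasing, so $g$ is monotone decreasing; hence $g$ ranges between its value $-\phi(1-\delta)$ at $a = 1-\delta$ and its value $\phi(\delta)$ at $a = 0$, giving $\abs{g(a)} \le \max\{\phi(\delta), \phi(1-\delta)\}$. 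It then remains to verify that $\phi(\delta) \geq \phi(1-\delta)$ whenever $\delta \le \tfrac12$, which is precisely where the hypothesis $\normw{p-q}{1} \le \tfrac12$ is genuinely used, and I expect it to be the main obstacle. The cleanest route I see is to observe that $\psi(\delta) := \phi(\delta) - \phi(1-\delta)$ is itself concave on $[0,\tfrac12]$ (its derivative $\phi'(\delta) + \phi'(1-\delta)$ is decreasing) and vanishes at both endpoints $\delta = 0$ and $\delta = \tfrac12$, which forces $\psi \ge 0$ throughout.

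Granting the sub-lemma, the second step notes that $\normw{p-q}{1} \le \tfrac12$ forces every $\delta_x \le \tfrac12$, since each summand of the $\ell_1$ norm is at most the whole sum; the sub-lemma therefore applies term by term and gives
\[
  \abs{H(p) - H(q)} \;\le\; \sum_{x \in \cX} \phi(\delta_x) \;=\; -\sum_{x \in \cX} \delta_x \log \delta_x.
\]
The final step converts this into the advertised closed form. Setting $r = \normw{p-q}{1} = \sum_x \delta_x$ and $\lambda_x = \delta_x / r$, so that $(\lambda_x)$ is a probability vector on $\cX$, a direct expansion yields $-\sum_x \delta_x \log \delta_x = -r \log r + r\,H(\lambda)$. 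Bounding $H(\lambda) \le \log \abs{\cX}$ by the maximum-entropy inequality then produces $\abs{H(p) - H(q)} \le r\log(\abs{\cX}/r)$, which is exactly the stated bound. The only routine care needed is the convention $0\log 0 = 0$, so that indices with $\delta_x = 0$ drop out harmlessly.
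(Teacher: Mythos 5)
Your proof is correct, and every step checks out: the monotonicity argument for $g(a)=\phi(a+\delta)-\phi(a)$, the concavity-plus-endpoints argument for $\phi(\delta)\ge\phi(1-\delta)$ on $[0,\tfrac12]$, and the normalization $\lambda_x=\delta_x/\normw{p-q}{1}$ combined with $H(\lambda)\le\log\abs{\cX}$ all hold. The paper itself offers no proof of this lemma---it is quoted verbatim from Theorem 17.3.3 of Cover and Thomas---and your argument is essentially the standard proof given in that reference, so your reconstruction matches the (cited) source rather than diverging from it.
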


\noindent {\bf Proof of Theorem~\ref{thm:prior mismatch}: }
\noindent The first inequality can be proved in four steps. Initially, we note that the objective function can be rewritten as
\begin{align}
  J(p_{A, B}, p_{\hat{B} | B}) = H(p_{A}) + H(p_{\hat{B}}) - H(p_{A, \hat{B}}).
\end{align}
Therefore, the difference between the objective functions with respect to $p_{A, B}$ and $q_{A, B}$ is bounded as:
\begin{align}
  & \abs{J(p_{A, B}, p_{\hat{B} | B}) - J(q_{A, B}, p_{\hat{B} | B})}
  \label{eq:mismatch entropy} \\
  & \quad \leq |H(p_{A}) - H(q_{A})| \ + \nonumber \\
  & \quad \quad \ |H(p_{\hat{B}}) - H(q_{\hat{B}})| \ + \nonumber \\
  & \quad \quad \ |H(p_{A, \hat{B}}) - H(q_{A, \hat{B}})|. \nonumber
\end{align}
The bound in Lemma~\ref{lemma} can be used to bound each of the terms in Equation~\eqref{eq:mismatch entropy}. For instance:
\begin{align}
  \normw{p_{A, \hat{B}} - q_{A, \hat{B}}}{1}
  \ = & \ \sum_{a, \hat{b}} \abs{\sum_b p(\hat{b} | b) [p(a, b) - q(a, b)]} \nonumber \\
  \ \leq & \ \sum_{a, b, \hat{b}} p(\hat{b} | b) \abs{p(a, b) - q(a, b)} \nonumber \\
  \ = & \ \sum_{a, b} \underbrace{\sum_{\hat{b}} p(\hat{b} | b)}_1 \abs{p(a, b) - q(a, b)} \nonumber \\
  \ = & \ \normw{p_{A, B} - q_{A, B}}{1}
\end{align}
and therefore:
\begin{align}
  & |H(p_{A, \hat{B}}) - H(q_{A, \hat{B}})| \label{eq:mismatch H1} \\
  & \quad \leq \normw{p_{A, B} - q_{A, B}}{1} \log
  \frac{\abs{\mathcal{A}} \abs{\mathcal{B}}}{\normw{p_{A, B} - q_{A, B}}{1}}.
  \nonumber
\end{align}
Similarly, it can be shown that:
\begin{align}
  & |H(p_{A}) - H(q_{A})| \label{eq:mismatch H2} \\
  & \quad \leq \normw{p_{A, B} - q_{A, B}}{1} \log
  \frac{\abs{\mathcal{A}}}{\normw{p_{A, B} - q_{A, B}}{1}} \nonumber \\
  & |H(p_{\hat{B}}) - H(q_{\hat{B}})| \label{eq:mismatch H3} \\
  & \quad \leq \normw{p_{A, B} - q_{A, B}}{1} \log
  \frac{\abs{\mathcal{B}}}{\normw{p_{A, B} - q_{A, B}}{1}}. \nonumber
\end{align}
Finally, the three upper bounds can be substituted into Equation~\eqref{eq:mismatch entropy}, which yields:
\begin{align}
  & \abs{J(p_{A, B}, p_{\hat{B} | B}) - J(q_{A, B}, p_{\hat{B} | B})} \\
  & \quad \leq
  3 \normw{p_{A, B} - q_{A, B}}{1} \log
  \frac{\abs{\mathcal{A}} \abs{\mathcal{B}}}{\normw{p_{A, B} - q_{A, B}}{1}}.
  \nonumber
\end{align}
Our first claim is proved by substituting $p_{\hat{B} | B}^\ast$  for $p_{\hat{B} | B}$ in the above equation.

The proof of our second claim is based on the following inequality:
\begin{align}
  & \abs{\E{p_{\hat{B}, B}}{d(\hat{B}, B)} - \E{q_{\hat{B}, B}}{d(\hat{B}, B)}} \nonumber \\
  & \quad =
  \abs{\sum_{a, b, \hat{b}} p(\hat{b} | b) [p(a, b) - q(a, b)] d(b, \hat{b})} \nonumber \\
  & \quad \leq
  \sum_{a, b, \hat{b}} p(\hat{b} | b) d(b, \hat{b}) \abs{p(a, b) - q(a, b)} \nonumber \\
  & \quad \leq d_\mathrm{max} \sum_{a, b}
  \underbrace{\sum_{\hat{b}} p(\hat{b} | b)}_1 \abs{p(a, b) - q(a, b)} \nonumber \\
  & \quad = d_\mathrm{max} \normw{p_{A, B} - q_{A, B}}{1}.
\end{align}
Based on this observation, it follows that:
\begin{align}
  \E{p_{\hat{B}, B}}{d(\hat{B}, B)} \
  \leq & \ \E{q_{\hat{B}, B}}{d(\hat{B}, B)} + \nonumber \\
  & \ d_\mathrm{max} \normw{p_{A, B} - q_{A, B}}{1} \nonumber \\
  \leq & \ \Delta + d_\mathrm{max} \normw{p_{A, B} - q_{A, B}}{1}.
\end{align}
The last step is due to the constraint $\E{q_{\hat{B}, B}}{d(\hat{B}, B)} \leq \Delta$ that is enforced in our problem~\eqref{eq: convex opt}. This concludes our proof. \qed

\bigskip

\noindent Finally, we provide a bound on the probability of $\normw{p_{A, B} - q_{A, B}}{1}$ being large, when $q_{A,B}$ is simply the empirical distribution obtained from counting on $n$ samples.

\begin{proposition}
Let $q_{A,B}$ be empirical distribution of $p_{A, B}$:

\begin{align*}
q_{A,B}(a,b)= \frac{\#\{a_i=a, b_i=b\}}{n}
\end{align*}
where $n$ is the total number of samples, and  $\#\{a_i=a, b_i=b\}$ is the number of examples where $A = a$ and $B = b$. Then

\begin{align*}
\mathds{P}(\|q_{A,B}-p_{A,B}\|_1 \geq \eps) \leq (n+1)^{|\mathcal{A}||\mathcal{B}|} 2^{-{2n\epsilon^2}}
\end{align*}
\end{proposition}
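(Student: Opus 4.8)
The plan is to recognize that $q_{A,B}$ is exactly the \emph{type} (empirical histogram) of the $n$ i.i.d. samples $(a_i,b_i)$ drawn from $p_{A,B}$ over the product alphabet $\mathcal{A}\times\mathcal{B}$ of size $k=|\mathcal{A}||\mathcal{B}|$, and then to invoke the method of types. First I would bound the number of distinct types: the number of empirical distributions realizable from $n$ samples on an alphabet of size $k$ is at most $(n+1)^k$, since each of the $k$ cells can only take one of the counts $0,1,\dots,n$. This is precisely what produces the polynomial prefactor $(n+1)^{|\mathcal{A}||\mathcal{B}|}$ in the statement.

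Second, I would control the probability mass of each individual type class. For any target type $\hat q$, the standard method-of-types estimate gives $\mathds{P}(q_{A,B}=\hat q)\leq 2^{-n D(\hat q\,\|\,p_{A,B})}$, where $D$ denotes the Kullback--Leibler divergence (in bits); this follows by writing the probability of a single sequence of type $\hat q$ as $2^{-n(H(\hat q)+D(\hat q\,\|\,p_{A,B}))}$ and multiplying by the crude cardinality bound $|T(\hat q)|\leq 2^{nH(\hat q)}$ on the type class.

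Third, I would combine the two via a union bound restricted to the ``bad'' types. Writing the event $\{\normw{q_{A,B}-p_{A,B}}{1}\geq\eps\}$ as the disjoint union of the type classes $T(\hat q)$ with $\normw{\hat q-p_{A,B}}{1}\geq\eps$, summing the per-type bound, and replacing each summand by the largest one yields
$$\mathds{P}\!\left(\normw{q_{A,B}-p_{A,B}}{1}\geq\eps\right)\leq (n+1)^{k}\,2^{-n\,\min_{\hat q:\,\normw{\hat q-p_{A,B}}{1}\geq\eps}\,D(\hat q\,\|\,p_{A,B})}.$$
It then remains only to lower-bound the divergence over the feasible set by a quantity of the form $2\eps^2$, which I would obtain from Pinsker's inequality: since $\normw{\hat q-p_{A,B}}{1}\geq\eps$ on that set, $D(\hat q\,\|\,p_{A,B})$ is bounded below by a constant multiple of $\eps^2$, giving the claimed $2^{-2n\eps^2}$.

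The main obstacle I anticipate is pinning down the exact constant in the exponent, since everything else (type counting, per-type mass, union bound) is routine. The textbook form of Pinsker reads $D(\hat q\,\|\,p)\geq \tfrac{1}{2\ln 2}\normw{\hat q-p}{1}^2$ in bits, equivalently $D\geq 2\,\delta_{TV}^2$ with $\delta_{TV}=\tfrac12\normw{\cdot}{1}$; either way this produces an exponent proportional to $\eps^2$ but with a constant strictly smaller than $2$. Matching the stated $2\eps^2$ therefore appears to require the total-variation normalization of Pinsker together with a careful bits-versus-nats accounting, and I would spend most of the effort checking that the normalization of $\normw{\cdot}{1}$ and the base of the divergence are consistent with the advertised constant, treating the structural steps as standard.
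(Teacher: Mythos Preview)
Your approach is essentially the same as the paper's: the paper simply names Sanov's theorem as a black box, whereas you reconstruct its proof via the method of types (type counting, per-type mass bound, union bound), and both arguments finish with Pinsker's inequality to lower-bound the divergence over the set $\{\hat q:\normw{\hat q-p_{A,B}}{1}\geq\eps\}$. Your concern about the constant is legitimate and the paper glosses over it as well: it writes Pinsker as $\normw{p-q}{1}\leq\sqrt{\tfrac{1}{2}D(p\|q)}$ to obtain $D\geq 2\eps^2$, which is the nats form, yet plugs this into a base-$2$ Sanov exponent; so the advertised $2^{-2n\eps^2}$ hides the same bits-versus-nats mismatch you flagged, and the structural argument is otherwise identical.
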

\noindent {\bf Proof:}
By Pinsker's Inequality, we get :

\begin{align}
\epsilon \leq  \|p_{A,B} - q_{A,B} \|_1 \leq \sqrt{ \frac{1}{2} D(p_{A,B} \| q_{A,B}) } 
\end{align}
which implies that $D(p_{A,B} \| q_{A,B}) \geq 2 \epsilon^2$. We now combine this with Sanov's Theorem to directly obtain the desired bound:

\begin{align}
\mathds{P}(\|q_{A,B}-p_{A,B}\|_1 > \eps) & \leq (n+1)^{|\mathcal{A}||\mathcal{B}|} 2^{-D(p_{A,B}\|q_{A,B})} \\
& \leq (n+1)^{|\mathcal{A}||\mathcal{B}|} 2^{-{2n\epsilon^2}}
\end{align}

\qed

\bigskip

\noindent Therefore, as the sample size $n$ increases, the probability of having a poor empirical estimator of the true distribution in terms of $\cL_1$-norm decreases with rate $(n+1)^{\mathcal{|A}||\mathcal{B}|}2^{-{2n\epsilon^2}}$

\section{Privacy for Large Data}\label{sec:highDim}

\begin{figure*}
\centering
\begin{tikzpicture}[node distance = 3cm, auto]
\node [inv] (data) {Discrete data};
\node [block, right of=data] (estimation) {Probability estimation};
\node [block, right of=estimation] (quantization) {Quantization};
\node [block, right of=quantization] (opt) {Convex Optimization};
\node [inv, right of=opt] (mapping) {Optimal Privacy Mapping};

\path [line] (data) -- (estimation);
\path [line] (estimation) -- node {$p_{A,B}$} (quantization);
\path [line] (quantization) -- node {$p_{A,C}$} (opt) ;
\path [line] (opt) -- node {$p_{\hat{C}|C}$} (mapping) ;
\end{tikzpicture}
\caption{The quantization approach for large alphabets}
\label{fig:diag}
\end{figure*}
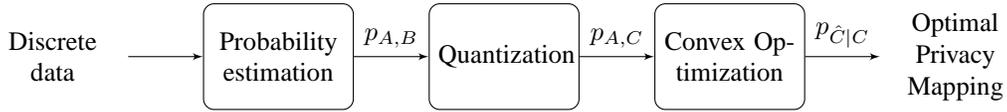

In real-world datasets, the alphabet $\mathcal{B}$ is often large. In particular, the number of symbols in the alphabet~$\mathcal{B}$ observed in the available dataset may be  $\theta(n)$, linear in the number of samples $n$ in the dataset. Suppose that $\hat{\mathcal{B}} = \mathcal{B}$. Then the number of optimized variables in Problem~\eqref{eq: convex opt} is  $\theta(n^2)$. 
Note that the distortion constraint is linear in~$p_{\hat{B} \mid B}(\hat{b} \mid b)$ , but the objective function is neither linear nor quadratic.  As a result, the optimization problem~(\ref{eq: convex opt}) cannot be solved using fast linear or quadratic programming solvers. In general, the problem is hard to solve when the size of alphabet $\mathcal{B}$ exceeds a few hundreds symbols.


To address this issue, we show how to solve our problem approximately by optimizing fewer variables. Our method comprises three steps. 
First, a quantization  \cite{gray98quantization} step maps the symbols in alphabet $\mathcal{B}$ to $\abs{\mathcal{C}}$ representative examples in a smaller alphabet $\mathcal{C}$. Second, we learn a privacy-preserving mapping $q_{\hat{C} \mid C}$ on the new alphabet, where $\hat{\mathcal{C}} = \mathcal{C}$. Third, the symbols in $\mathcal{B}$ are mapped to the representative examples $\hat{\mathcal{C}}$ based on the learned mapping $q_{\hat{C} \mid C}$. Our approach is summarized in Algorithm~\ref{alg:quantized PPM}  and Diagram~\ref{fig:diag}.


Our solution has several notable properties. 
To begin with, the privacy-preserving mapping $q_{\hat{C} \mid C}$ is learned on the reduced alphabet $\mathcal{C}$. Thus, we need to solve the convex optimization~\eqref{eq: convex opt} for only $|\mathcal{C}| |\hat{\mathcal{C}}|$ variables instead of $|\mathcal{B}| |\hat{\mathcal{B}}|$. In practice, $|\mathcal{C}| \ll |\mathcal{B}|$ and this results in major computational savings. 
Second, quantization and privacy-preserving optimization are done separately. Therefore, any quantization method can be easily combined with our approach. In particular, we can minimize the quantization error in the quantization step, and then our privacy mechanism guarantees the optimal mapping in terms of additional distortion.
Finally, quantization obviously yields a suboptimal privacy-accuracy tradeoff, since the quantization step is an additional source of distortion. However, in Theorem~\ref{thm:quantization error}, we quantify how quantization affects the privacy-accuracy tradeoff, and show that the levels of privacy that can be achieved are not affected, but come at the expense of a bounded amount of distortion.

%
%

In the rest of this section, we analyze Algorithm~\ref{alg:quantized PPM}. Algorithm~\ref{alg:quantized PPM} essentially solves  the following variant of problem \eqref{eq: convex opt}: 
\begin{align}
  \underset{p_{\hat{C} \mid C}}{\text{minimize}} \ \ & J(q_{A, C}, p_{\hat{C} \mid C})
  \label{eq:quantization problem} \\
  \text{subject to:} \ \ & \E{p_{C, \hat{C}}}{d(C, \hat{C})} \leq \Delta \nonumber \\
  & p_{\hat{C} \mid C} \in \rm{Simplex}; \nonumber
\end{align}
where alphabets $\mathcal{B}$ and $\hat{\mathcal{B}}$ are substituted for alphabets $\mathcal{C}$ and $\hat{\mathcal{C}}$, and the joint probability distribution over $A$ and $C$ is defined as 
\begin{align}
  q_{A, C}(a, c) = \sum_{b \sim c} p_{A, B}(a, b),
  \label{eq:quantized prior}
\end{align}
where $b \sim c$ means that the symbol $b$ is in the cluster represented by center $c$. The above equation aggregates the probability mass of all symbols in the cluster in its center. The symbols in $\mathcal{B}$ are mapped to $\hat{\mathcal{C}}$ according to
\begin{align}
  p_{\hat{C} \mid B}(\hat{c} \mid b) = q_{\hat{C} \mid C}(\hat{c} \mid \psi(b)),
  \label{eq:quantized mapping}
\end{align}
where $\psi: B \rightarrow C$ is a function that maps a symbol in $\mathcal{B}$ to a cluster center in $\mathcal{C}$. Note that the probability distributions that are associated with optimization \eqref{eq:quantization problem} are marked by $q$. 
We now prove our main claim.

\begin{theorem}
\label{thm:quantization error} Let $q_{\hat{C} \mid C}$ be a solution to problem \eqref{eq:quantization problem} and $p_{\hat{C} \mid B}$ be the corresponding mapping from $\mathcal{B}$ (Equation~\ref{eq:quantized mapping}). Moreover, let $\mathcal{C}$ be an alphabet such that $\max\limits_{b \in \mathcal{B}} \min\limits_{c \in \mathcal{C}} d(b, c) \leq r$. Then the privacy leakage $J(p_{A, B}, p_{\hat{C} \mid B})$ of the mapping $p_{\hat{C} \mid B}$ is equal to the value of the objective function of (\ref{eq:quantization problem}):
\begin{align*}
  J(p_{A, B}, p_{\hat{C} \mid B}) = J(q_{A, C}, q_{\hat{C} \mid C}),
\end{align*}
and its total distortion rate is no more than $r$ larger than the target~$\Delta$:
\begin{align*}
  \E{p_{B, \hat{C}}}{d(B, \hat{C})} \leq \Delta + r.
\end{align*}
\end{theorem}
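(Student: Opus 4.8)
The plan is to establish the two claims separately: the first is an exact algebraic identity showing that quantization does not change the joint law of $(A, \hat{C})$, and therefore not the mutual information; the second is a distortion decomposition that charges the extra cost to the quantization radius $r$.

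For the privacy-leakage equality, I would recall that $J$ is simply the mutual information $I(A; \hat{C})$, which depends only on the joint distribution $p_{A, \hat{C}}$. It thus suffices to show that the joint law induced by the true prior $p_{A, B}$ together with the lifted mapping $p_{\hat{C} \mid B}$ of Equation~\eqref{eq:quantized mapping} coincides with the one induced by the quantized prior $q_{A, C}$ of Equation~\eqref{eq:quantized prior} together with $q_{\hat{C} \mid C}$. Computing directly,
$$p_{A, \hat{C}}(a, \hat{c}) = \sum_b p_{\hat{C} \mid B}(\hat{c} \mid b)\, p_{A, B}(a, b) = \sum_b q_{\hat{C} \mid C}(\hat{c} \mid \psi(b))\, p_{A, B}(a, b),$$
and then regrouping the sum over $b$ according to the cluster center $c = \psi(b)$, the factor $q_{\hat{C} \mid C}(\hat{c} \mid c)$ is constant within a cluster and can be pulled out, leaving $\sum_{b \sim c} p_{A, B}(a, b) = q_{A, C}(a, c)$ by the definition in Equation~\eqref{eq:quantized prior}. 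This gives $p_{A, \hat{C}}(a, \hat{c}) = \sum_c q_{\hat{C} \mid C}(\hat{c} \mid c)\, q_{A, C}(a, c) = q_{A, \hat{C}}(a, \hat{c})$, so the two mutual informations are identical. This half is essentially bookkeeping; the only point to verify carefully is the cluster regrouping and the interchange of summations.

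For the distortion bound I would first invoke the triangle inequality for $d$ to write $d(b, \hat{c}) \leq d(b, \psi(b)) + d(\psi(b), \hat{c})$, where $\psi(b)$ is the nearest cluster center. Taking the expectation over $p_{B, \hat{C}}$ and splitting along this inequality, the first term reduces to $\sum_b p_B(b)\, d(b, \psi(b))$ after summing out $\hat{c}$ (the conditional $p_{\hat{C} \mid B}(\cdot \mid b)$ sums to one); since $\psi$ sends each $b$ to its nearest center, this is bounded by $r$ under the hypothesis $\max_b \min_c d(b, c) \leq r$. For the second term, regrouping over clusters exactly as in the first claim collapses $\sum_{b \sim c} p_B(b)$ into the marginal $q_C(c)$, so the term equals $\E{q_{C, \hat{C}}}{d(C, \hat{C})}$, which is at most $\Delta$ by the distortion constraint enforced in problem~\eqref{eq:quantization problem}. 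Adding the two bounds yields $\E{p_{B, \hat{C}}}{d(B, \hat{C})} \leq r + \Delta$.

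The main point to flag—more a hidden hypothesis than a genuine obstacle—is that the distortion decomposition relies on $d$ satisfying the triangle inequality, i.e. being a genuine metric, which is not explicitly assumed in the distortion-constraint setup (there $d$ is only required to be nonnegative). The privacy-leakage identity, by contrast, is robust and needs no such assumption, since it is a pure statement about the induced joint distribution.
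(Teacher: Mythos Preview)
Your proposal is correct and follows essentially the same route as the paper: both establish the leakage equality by showing $p_{A,\hat{C}} = q_{A,\hat{C}}$ via the cluster regrouping $\sum_b = \sum_c \sum_{b \sim c}$, and both prove the distortion bound by applying the triangle inequality $d(b,\hat{c}) \leq d(b,\psi(b)) + d(\psi(b),\hat{c})$ and then bounding the two pieces by $r$ and $\Delta$ respectively. Your remark that the triangle inequality is a tacit assumption on $d$ is well taken---the paper uses it without comment.
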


\begin{proof}
The information-leakage equality can be proved as follows. First, both $J(p_{A, B}, q_{\hat{C} | B})$ and $J(q_{A, C}, q_{\hat{C} | C})$ can be rewritten as
\begin{align}
  J(p_{A, B}, q_{\hat{C} | B})
  \ = & \ H(p_A) + H(p_{\hat{C}}) - H(p_{A, \hat{C}}) \\
  J(q_{A, C}, q_{\hat{C} | C})
  \ = & \ H(q_A) + H(q_{\hat{C}}) - H(q_{A, \hat{C}}),
\end{align}
where
\begin{align}
  p(a, \hat{c})
  \ = & \ \sum_b q(\hat{c} | \psi(b)) p(a, b) \\
  q(a, \hat{c})
  \ = & \ \sum_c q(\hat{c} | c) q(a, c).
\end{align}
Second, note that
\begin{align}
  p(a, \hat{c})
  \ = & \ \sum_b q(\hat{c} | \psi(b)) p(a, b) \nonumber \\
  \ = & \ \sum_c q(\hat{c} | c) \sum_{b \sim c} p(a, b) \nonumber \\
  \ = & \ \sum_c q(\hat{c} | c) q(a, c) \nonumber \\
  \ = & \ q(a, \hat{c}).
\end{align}
So the two distributions are identical. Thus $H(p_{A, \hat{C}}) = H(q_{A, \hat{C}})$, and an analogous result holds for the entropies of the marginals. As a result, the privacy leakage of the mapping $q_{\hat{C} | B}$ on $\mathcal{B}$ is equal to the privacy leakage of the mapping $q_{\hat{C} | C}$ on $\mathcal{C}$.

The distortion inequality is proved as follows. First, note that \eqref{eq:quantized mapping} implies
\begin{align}
  q_{B, \hat{C}}(b, \hat{c})
  \ = & \ \sum_a q_{\hat{C} | B}(\hat{c} | b) p_{A, B}(a, b) \nonumber \\
  \ = & \ \sum_a q_{\hat{C} | C}(\hat{c} | \psi(b)) p_{A, B}(a, b).
\end{align}
Based on this equality, we can bound the distortion as
\begin{align}
  \E{q_{B, \hat{C}}}{d(B, \hat{C})}
  \ = & \ \sum_{b, \hat{c}} q(b, \hat{c}) d(b, \hat{c}) \nonumber \\
  \ = & \ \sum_{a, b, \hat{c}} q(\hat{c} | \psi(b)) p(a, b) d(b, \hat{c}) \nonumber \\
  \ = & \ \sum_{a, c, \hat{c}} q(\hat{c} | c) \sum_{b \sim c} p(a, b) d(b, \hat{c}) \nonumber \\
  \ \leq & \ \sum_{a, c, \hat{c}} q(\hat{c} | c) \sum_{b \sim c} p(a, b)
  [d(b, c) + d(c, \hat{c})] \nonumber \\
  \ = & \ \sum_{a, c, \hat{c}} q(\hat{c} | c)
  \underbrace{\sum_{b \sim c} p(a, b)}_{q(a, c)} d(c, \hat{c}) \ + \nonumber \\
  & \ \sum_{a, c} \underbrace{\sum_{\hat{c}} q(\hat{c} | c)}_1
  \sum_{b \sim c} p(a, b) d(b, \psi(b)) \nonumber \\
  \ \leq & \ \E{q_{C, \hat{C}}}{d(C, \hat{C})} +
  r \sum_{a, b} p(a, b) \nonumber \nonumber \\
  \ \leq & \ \Delta + r.
\end{align}
This concludes our proof. 

\end{proof}

\medskip

Theorem~\ref{thm:quantization error} states that the information leakage of the mapping $p_{\hat{C} \mid B}$ is the same as that of the optimized mapping $q_{\hat{C} \mid C}$. So we optimize the quantity of interest $J(p_{A, B}, p_{\hat{C} \mid B})$ in a time which is independent of the size of the input alphabet $\mathcal{B}$. The total distortion increases due to quantization, linearly with the maximum distance $r$ between any example $b$ and its closest representative example $\psi(b)$.

The maximum distance $r$ can be minimized by existing quantization techniques, e.g. online $k$-center clustering~\cite{charikar97incremental} and cover trees~\cite{beygelzimer06cover}. Both methods quantize data nearly optimally. In particular, if the minimum quantization error by $\abs{\mathcal{C}}$ examples is $r^\ast$, then the maximum error produced by these methods is $8 r^\ast$. Note that finding $\abs{\mathcal{C}}$ examples that minimize the quantization error is NP hard.

\begin{algorithm}[t]
  \caption{Quantized privacy preserving mapping.}
  \label{alg:quantized PPM}
  \begin{algorithmic}
    \State {\bf Input:} prior $p_{A, B}$
    \State
    \ForAll{$(a, c) \in (\mathcal{A}, \mathcal{C})$}
      \State $q_{A, C}(a, c) \gets \sum_{b \sim c} p_{A, B}(a, b)$
    \EndFor
    \State solve the convex optimization problem over $p_{\hat{C} | C}$:
\begin{align*}
  \underset{p_{\hat{C} | C}}{\text{minimize}} & \quad
  J(q_{A, C}, p_{\hat{C} | C}) \label{eq:quantization problem} \\
  \text{subject to} & \quad
  \E{p_{C, \hat{C}}}{d(C, \hat{C})} \leq \Delta \nonumber \\
  & \quad p_{\hat{C} | C} \in \rm{Simplex}; \nonumber
\end{align*}
\State return optimal solution $q_{\hat{C} | C}$
    \ForAll{$(b, \hat{c}) \in (\mathcal{B}, \hat{\mathcal{C}})$}
      \State $p_{\hat{C} | B}(\hat{c} | b) \gets q_{\hat{C} | C}(\hat{c} | \psi(c))$
    \EndFor
    \State
    \State {\bf Output:} mapping $p_{\hat{C} | B}$
  \end{algorithmic}
\end{algorithm}

\section{Datasets}\label{sec:dataSets}

In order to evaluate our framework, we apply it to three datasets. The first two datasets, the \emph{Census} data \cite{censusdata} and \emph{Iris} data \cite{irisdata} are well-known publicly available datasets. The third one, called \emph{Politics-and-TV}, is a dataset on political convictions and TV preferences, that we collected by conducting a survey, as explained in Section~\ref{sec:PoliticsTVdata}.

These three datasets were selected because each allows us to illustrate different components of our work. We use the \emph{Census} dataset to illustrate the basic performance of Algorithm~\ref{alg:PPM}. We evaluate Algorithm~\ref{alg:quantized PPM} on both the \emph{Iris} and \emph{Politics-and-TV} datasets. We start with the \emph{Iris} data because it is a simple low dimensional dataset that allows us to visualize the effect of our proposed distortion techniques. The \emph{Politics-and-TV} data has a high-dimensional alphabet and thus allows us to evaluate how quantization influences our ability to provide privacy. The \emph{Census} and \emph{Politics-and-TV} datasets have data that lie in discrete sets, while the \emph{Iris} dataset has continuous entries. We present the optimal  privacy-accuracy curve for each case, and give some insights on the privacy mappings.

\subsection{Census Dataset}

The \emph{Census} dataset is a well studied dataset in the Machine Learning community. Based on the 1994 Census, the dataset is a sample of the United States population, and contains both categoric and numerical features. More precisely, for each entry in the dataset, there are features such as age, workclass, education, gender, and native country, as well as income category (smaller or larger than 50k per year). For our purposes, we consider the information to be released publicly as the education, gender, and age, while the income category is the private information to be protected. It is noteworthy to know that about 76\% of the people in the dataset have an income smaller than 50k.

Our privacy mechanism in this case uses erasures. Erasure policies are ones in which we advise a user how to modify their public profile before it is released, by erasing 1, 2 or 3 pieces of information, in order to make it hard to infer income category.

The suggestion is tailored to each individual.

The joint probability distribution $p_{A,B}$ is estimated over the available data. Because of the discrete nature of the data, the low dimension of the feature space considered, and the large number of available observations (about 50,000 entries), the joint distribution can be estimated easily with very high confidence. In this case, there is essentially no prior mismatch.

\subsection{Iris Dataset}

The \emph{Iris} dataset has been used by the Machine Learning community extensively \cite{irisdata}. The dataset consists of four numerical attributes (petal length in $cm$, petal width in $cm$, sepal length in $cm$, and sepal width in $cm$) and one class attribute (Iris Setosa, Iris Versicolour, Iris Virginica) that identifies the particular category of Iris flower. There are 50 samples per class, for a total of 150 samples. It has been shown that using the 4 numerical attributes, it is possible to build very good classifiers to identify the type of Iris flower \cite{ucimlrepository}. We can visualize this by projecting the 4 numerical values on 2 principal components (see Fig.~\ref{fig:iris}), which explain 97\% of the variance. It is straightforward to see that the three flower classes are almost linearly separable in this space. We also see that the Iris Setosa has attributes that differentiate it much better, whereas there might be some confusion between the Iris Versicolour, and Iris Virginica. However, we again emphasize the fact that even though the latter two flowers are close in this space, the classifiers' accuracy is still very good (between 80\% and 100\% accuracy).

\psfrag{Mutual Information}[bc][cc][1][0]{{\scriptsize Mutual Information}}
\psfrag{Mutual information}[bc][cc][1][0]{{\scriptsize Mutual Information}}
\psfrag{l2 Distortion}{{\scriptsize $l_2$ distortion}}
\psfrag{Expected number of erasures}{{\scriptsize Expected number of erasures}}
\psfrag{Number of people}[c][c]{{\scriptsize Number of people}}
\psfrag{Demographics}{}
\psfrag{Gender}[c][c]{{\scriptsize Gender}}
\psfrag{Age}[c][c]{{\scriptsize Age}}
\psfrag{Politics}[c][c]{{\scriptsize Politics}}
\psfrag{Male}[c][c]{{\scriptsize \textcolor{white}{Male}}}
\psfrag{Female}[c][c]{{\scriptsize \textcolor{white}{Female}}}
\psfrag{Democrat}[c][c]{{\scriptsize \textcolor{white}{Democrat}}}
\psfrag{Republican}[c][c]{{\scriptsize \textcolor{white}{Republican}}}
\psfrag{65+}[c][c]{{\tiny 65+}}
\psfrag{51-65}[c][c]{{\tiny \textcolor{black}{ 51-65}}}
\psfrag{31-50}[c][c]{{\tiny \textcolor{white}{ 31-50}}}
\psfrag{21-30}[c][c]{{\tiny \textcolor{white}{ 21-30}}}
\psfrag{13-20}[c][c]{{\tiny \textcolor{white}{ 13-20}}}
\psfrag{<13}[c][c]{{\tiny \textcolor{white}{ $<13$}}}

\begin{figure*}
\minipage{0.32\textwidth}
\psfrag{Male}{{\footnotesize Male}}
\includegraphics[width=\linewidth]{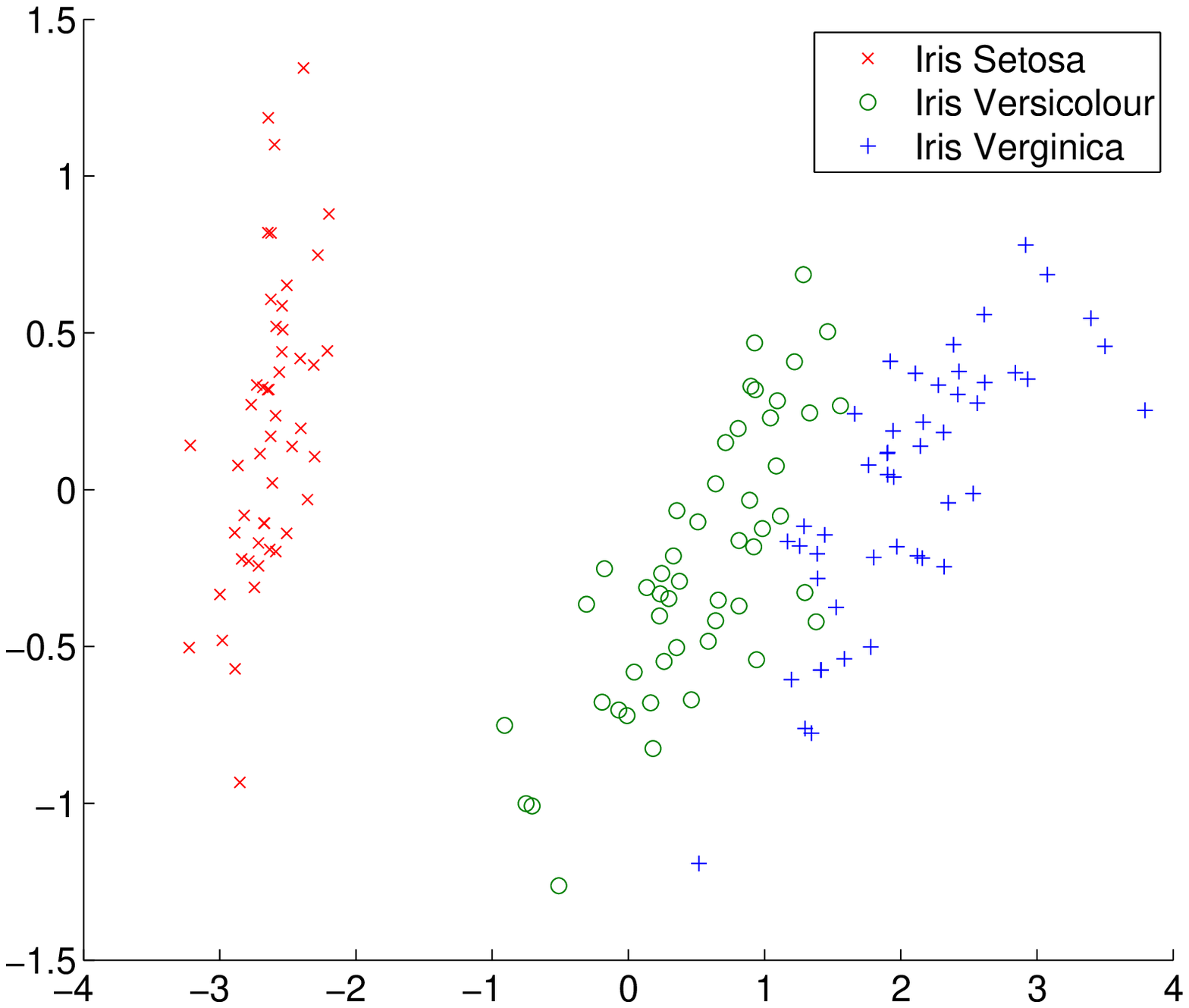}
\caption{Projection of the 4 numerical attributes in the 2 first principal components}
\label{fig:iris}
\endminipage\hfill
\minipage{0.32\textwidth}
\includegraphics[width=\linewidth]{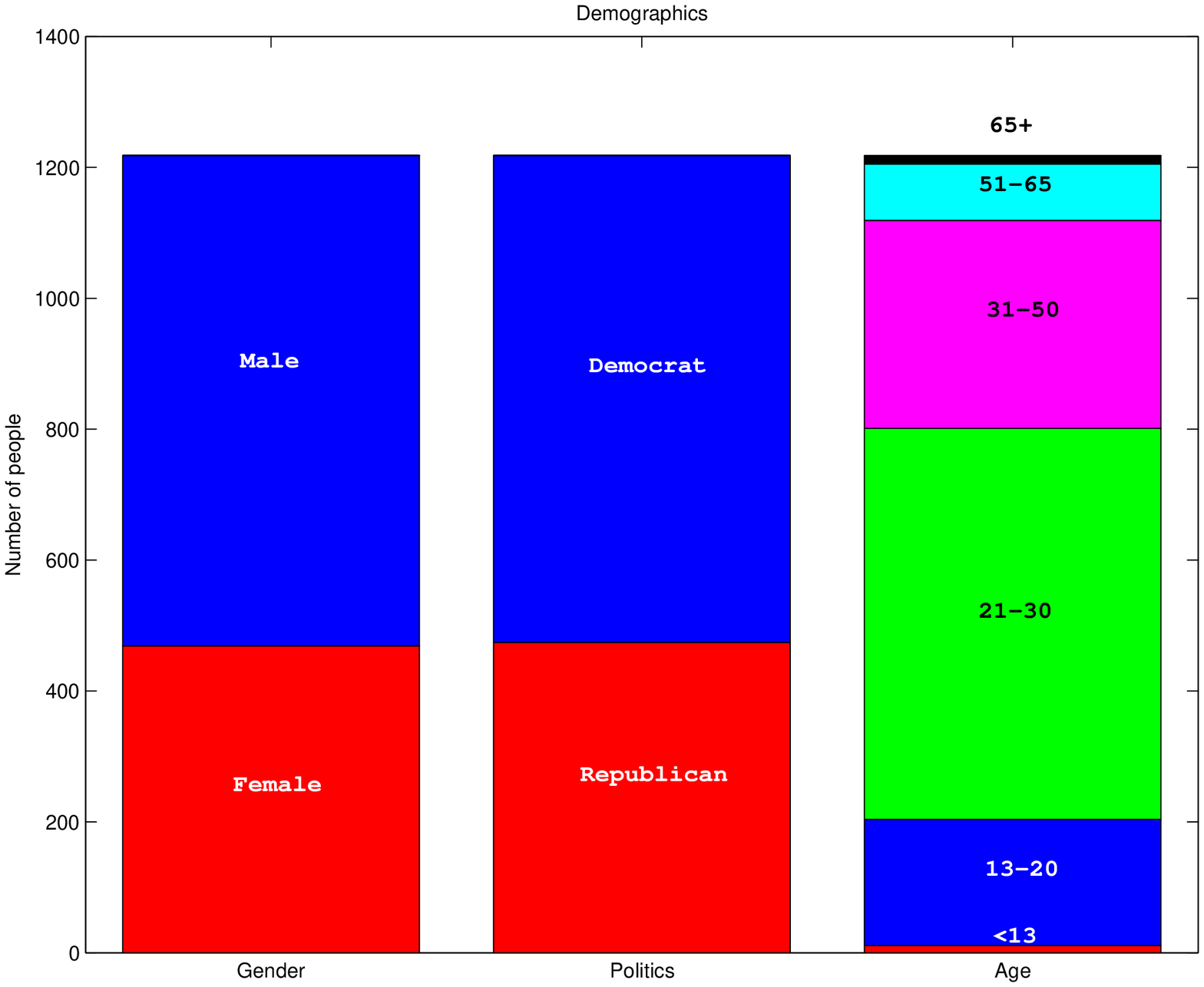}
\caption{Demographics of Survey Population}
\label{fig:PoliticsTV-Demographics}
\endminipage\hfill
\minipage{0.32\textwidth}
\includegraphics[width=\linewidth]{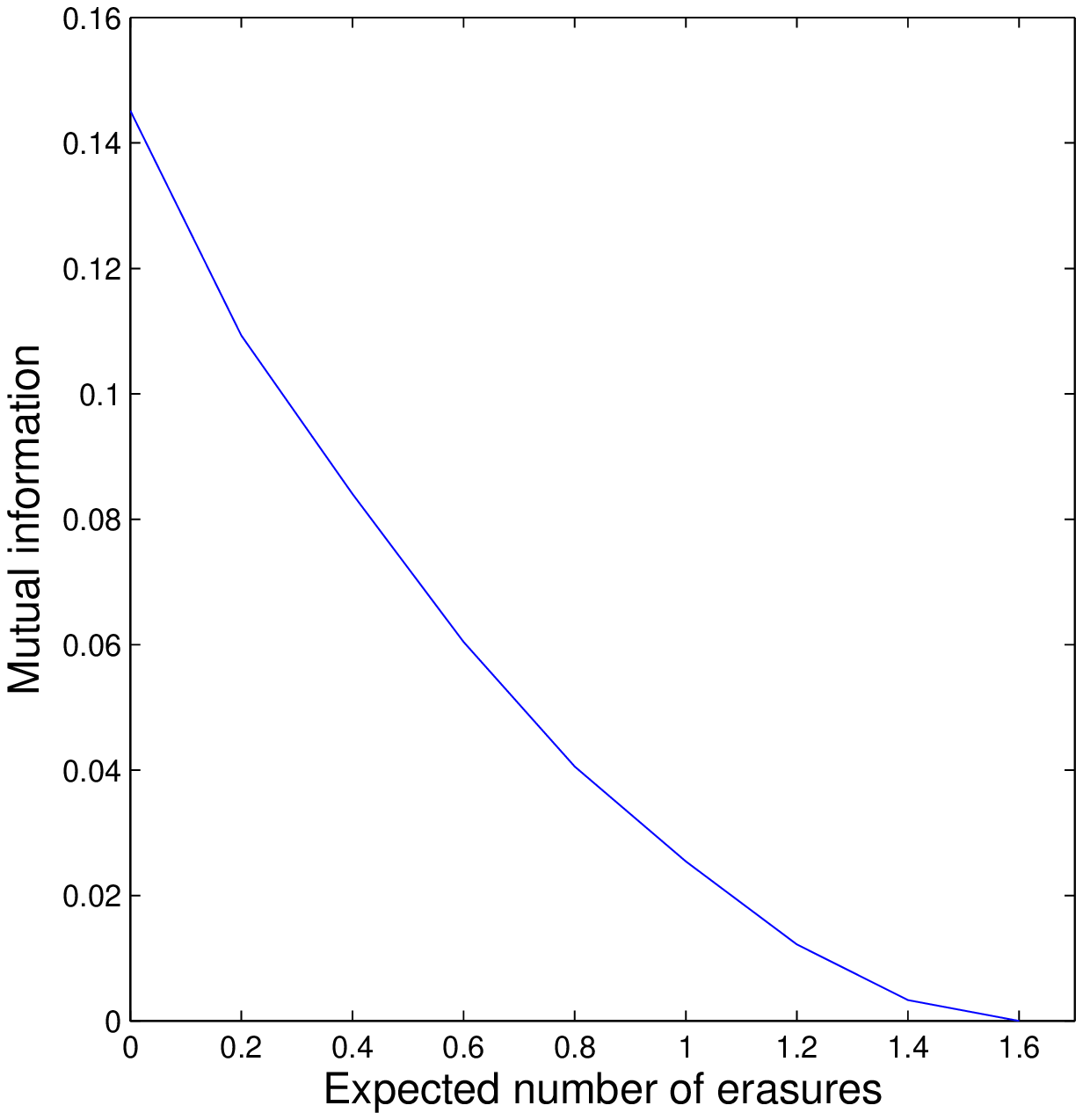}
\caption{Census data: Privacy Distortion curve}
\label{fig:censusPrivacy}
\endminipage
\end{figure*}

Because of the continuous nature of the data, we consider kernel based tools to estimate the joint probability distribution. Our prior distribution may hence be slightly mismatched compared to the true prior.

\subsection{Politics and Media Dataset}\label{sec:PoliticsTVdata}

The \emph{Politics-and-TV} dataset gathers data on political convictions and TV preferences of viewers in the USA in Fall 2012. The collection of such data was motivated by large scale surveys such as \cite{buzztvsurvey,Experian-survey}, which illustrated that the audiences for a number of TV shows can be distinctly characterized. Opinion polls have also published articles in the press with lists of top-10 or 20 TV shows that are most indicative of political affiliation. For example, \emph{The Colbert Report} is predominantly watched by Democrats, whereas \emph{Fox News} and \emph{Swamp Loggers} are primarily watched by Republicans. We thus started from the premise that it is possible to use public information about a user's TV preferences, such as the list and ratings of TV shows he watches, to infer some private information, namely political convictions. It should be noted that fewer than 1\% of Facebook users disclose their political views in their public profile, which seems to indicate that political convictions are deemed private information. We describe hereafter the data collection process, and our dataset.

\noindent\textbf{Data Collection}: We designed a survey that users take voluntarily. In our survey, users were first asked to provide demographic information (gender, age group, state they live in) as well as their political convictions (Democrat, Republican). Then users were asked to complete a sequence of 6 panels, each panel presenting the user with 6-8 TV shows of a certain genre, namely Sitcoms, Reality Shows, TV series, Talk Shows, News, and Sports, for a total of 50 TV shows. Users were asked to rate only those TV shows that they watched on a scale from 1 to 5--- the usual star rating system. After providing their ratings, users were shown, for each genre, how their ratings compared with the average ratings given by Democrats and Republicans. In our privacy policy, users were informed that no private information that can be used to identify an individual was stored---  we did not store cookies, nor IP addresses, etc. Thus the data collected is by consenting users.

We ran our survey in two phases. In phase 1 (October 2012), we ran it on Mechanical Turk requesting only US-based workers. An initial experiment revealed that 80\% of users completing the survey were Democrats. To diminish this bias, we reran the survey in two batches. For the first batch, we limited the  user pool to Democrats only, and in the second batch we limited it to Republicans only. This mechanism helped although it still did not produce equal numbers of Democrats and Republicans. In total, we obtained 854 surveys, with 518 Democrats and 336 Republicans. In phase 2 (November 2012), we launched our survey on the public web at www.PoliticsandMedia.org. We drove traffic to the survey website by running   advertising campaigns on MyLikes.com and Google AdWords, shortly before the U.S. 2012 presidential election. From this, we obtained another 364 completed surveys, with 226 Democrats and 138 Republicans. We conducted this survey in two places (Mechanical Turk and the Web) to create more diversity of users in our survey. An advantage of the Mechanical Turk approach is that users are incentivized to properly complete the survey. We threw out surveys which were clearly never finished, e.g. no ratings,  and the numbers above reflect the final retained surveys.

\noindent\textbf{Dataset}: The  dataset contains entries for 1,218 users, broken into 744 Democrats, and 474 Republicans. For each user, the dataset entry is a vector $[\mathrm{age}, \mathrm{gender}, \mathrm{state}, \mathrm{politics}, r_1, \ldots... r_{50}]$ where $r_i\in\{0,1,\ldots,5\}$ is the user's star rating for show $i$ if the user rated the show, and 0 otherwise.
The 5 most watched TV shows are \emph{The Daily Show with Jon Stewart, The Colbert Report, NFL, The Big Bang Theory,} and \emph{Family Guy}. Figure~\ref{fig:PoliticsTV-Demographics} shows the demographics of the 1,218 users in the dataset. 

In the sequel, we will consider two versions of the rating vector: the 5-star rating vector $R\in \{0,1,\ldots,5\}^{50}$, and the binarized rating vector $B\in \{0,1\}^{50}$. The binarized rating $b_i$ of show $i$ is obtained by setting $b_i= 1$ if the original rating $r_i>=4$ clearly indicating that the user likes the show, and $b_i=0$ otherwise.

\section{Results}\label{sec:Results}

\subsection{Baseline Convex optimization on Census Dataset}


We demonstrate here a direct application of the convex optimization approach Algorithm-1 described earlier on the Census dataset. This can be seen as a simple application as we do not need to apply a quantization step. For this dataset, we will use the \emph{erasure-distortion} approach meaning that our proposed distortion to an individual's public data (age, education, and gender) may be to remove a subset of features. In this way, we distort without lying, and our distortion metric is the number of erasures.


Formally, let $B(u)=(b_1,b_2,b_3, a)$ be the features of  user $u$, where $b_1 \in \{male,female\}$, $b_2 \in \{$\emph{young, adult, old}$\}$ and $b_3 \in \{$\emph{high-school, college degree, master degree, doctorate}$\}$. The feature $a$ is the private attribute defined as  $a \in \{high,low\}$ where high/low refers to an income above/below 50K\$ respectively. In this case the output alphabet $ \mathcal{\hat{B}}$ after the privacy mapping is larger than the input alphabet $\mathcal{B}$ as each feature can be replaced by an erasure. Because of the mapping restriction $p_{\hat{b}|b}$ can have non zero values if $b$ and $\hat{b}$ differ only in positions where $\hat{b}$ has an erasure. We define the distortion metric $d(\hat{b},b)$ as the number of erasures in $\hat{b}$, when $b$ and $\hat{b}$ match in non-erasure positions and $d(\hat{b},b)=\infty$ otherwise. \\

We have tested the algorithm for different distortion constraint values and obtained the privacy-distortion curve shown in Fig. \ref{fig:censusPrivacy}. The y-axis captures the privacy leakage measured by the mutual information. The x-axis quantifies the distortion in terms of average number of erasures. Without any of our distortions (0 erasures), the privacy leakage, or mutual information, is 0.142 bits. If, on average, we erase one of the three features in these user profiles, then the privacy leakage drops to roughly 0.025 bits. This can be interpreted as requiring an adversary to ask many more questions in order to learn the private information. Perfect privacy (mutual information is zero) is obtained when the expected erasures is $1.5$ features (out of three). This confirms that gender, age and education are related to one's income.

Since the privacy-distortion curve alone does not provide  much insight on the privacy mapping, we have represented some specific case of mappings in Table \ref{tab:census}. It is interesting to see that some different categories get mapped to an identical vector, for example, row 2 and row 5 are both mapped to `male' with 2 erasures. This illustrates the confusion created by our distortions; an adversary that sees such an output cannot determine its original form, and will likely learn next to nothing about these individuals incomes.
\begin{figure*}
\centering
\begin{tabular}{| c | c | c | c || c | c | c |}
\cline{2-7}
\multicolumn{1}{c}{} & \multicolumn{3}{| c |}{Original features}&\multicolumn{3}{|c|}{Private mapping}  \\
\hline

$<50$k&male & young & College degree & male & - & - \\ \hline
N&male & adult & College degree & - & adult & College degree \\ \hline
$<50$k&male & young & High School & male & - & High school \\ \hline
N&male & adult & High School & male & - & High School \\ \hline
$<50$k&female & young & High School & - & - & - \\ \hline
$<50$k&female & young & College degree & - & - & - \\ \hline
$>50$k&male & adult & Masters degree & male & - & - \\ \hline
$<50$k&female & adult & College degree & - & adult & College degree \\ \hline
\end{tabular}
\caption{Most probable mapping for the Top 8 categories in the Census Dataset. Initially some set of attributes may be highly correlated with income (denoted by $<50k$ and $>50k$), or be more neutral (denoted by N).}
\label{tab:census}
\end{figure*}

\subsection{Mismatched prior and quantization on Iris data}

\begin{figure*}
\minipage{0.32\textwidth}
\includegraphics[width=\textwidth]{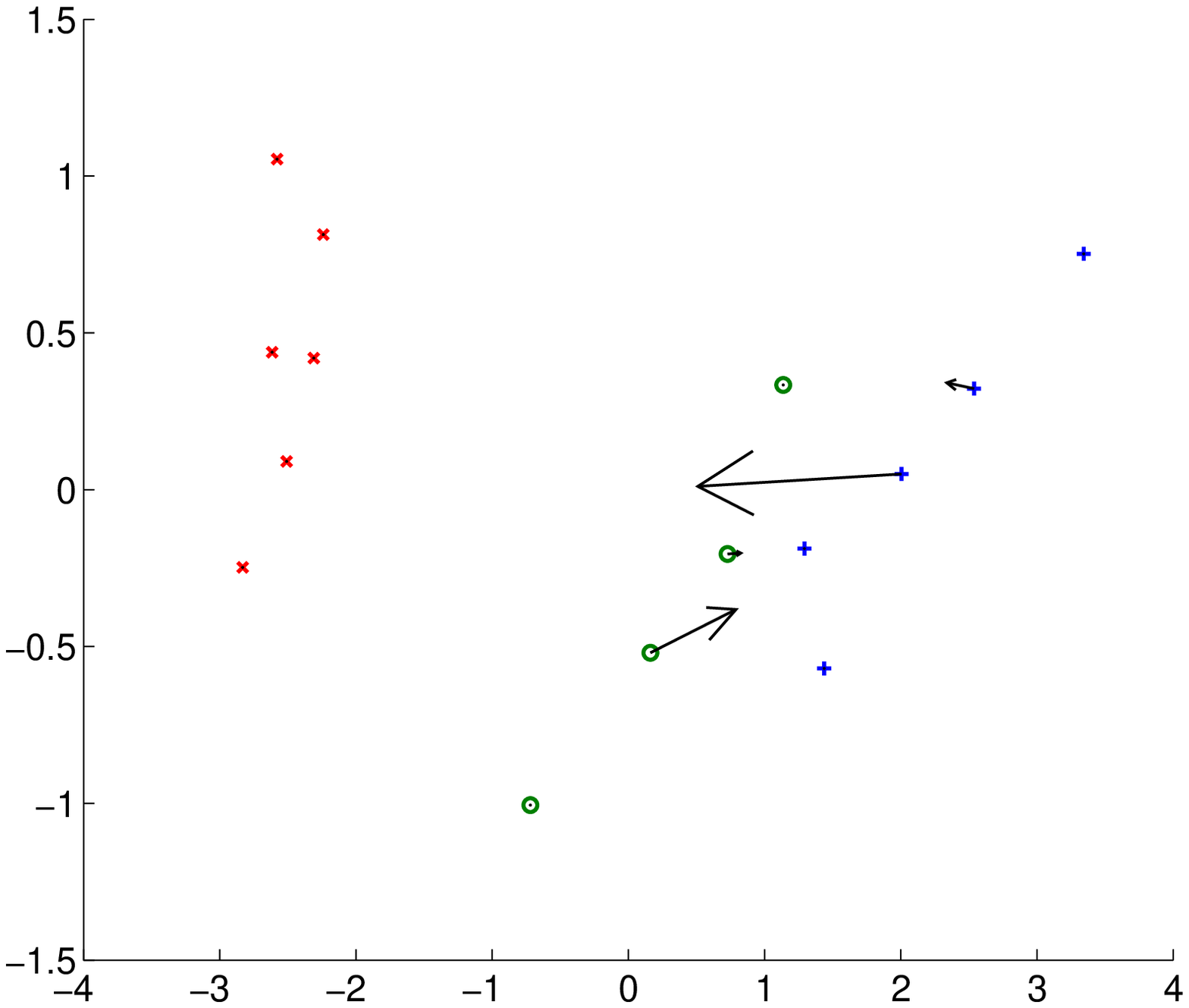}
\caption{Iris data: Each point represents a cluster. The arrows show graphically the privacy mapping for a low distortion of $0.5$ in $l_2$ distance. Clusters on the far left are untouched}
\label{fig:velocity_1}
\endminipage\hfill
\minipage{0.32\textwidth}
\includegraphics[width=\textwidth]{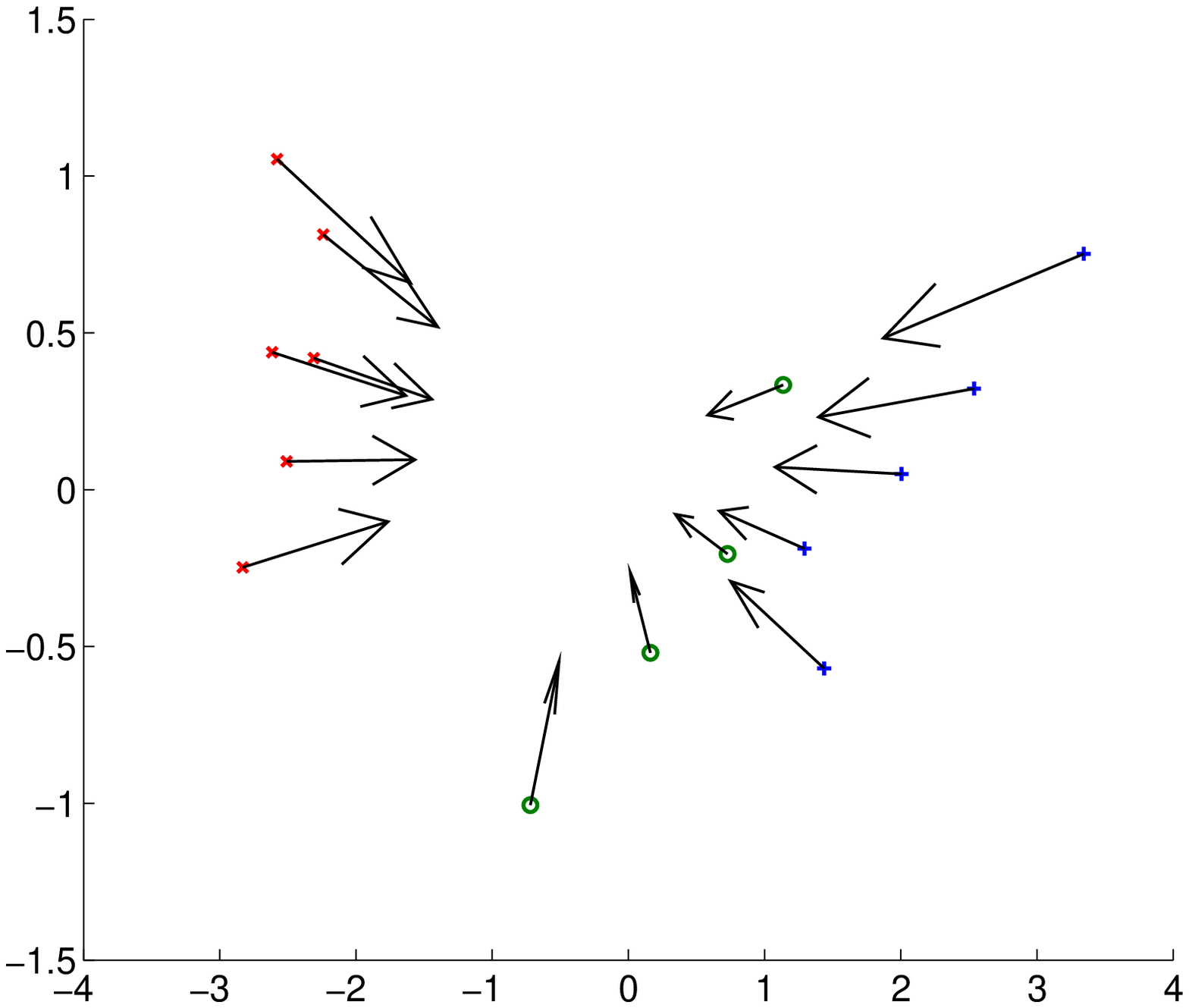}
\caption{Iris data: When we allow high distortion however, all clusters are affected, even those on the far left.}\label{fig:velocity_2}
\endminipage\hfill
\minipage{0.32\textwidth}
\includegraphics[width=\textwidth]{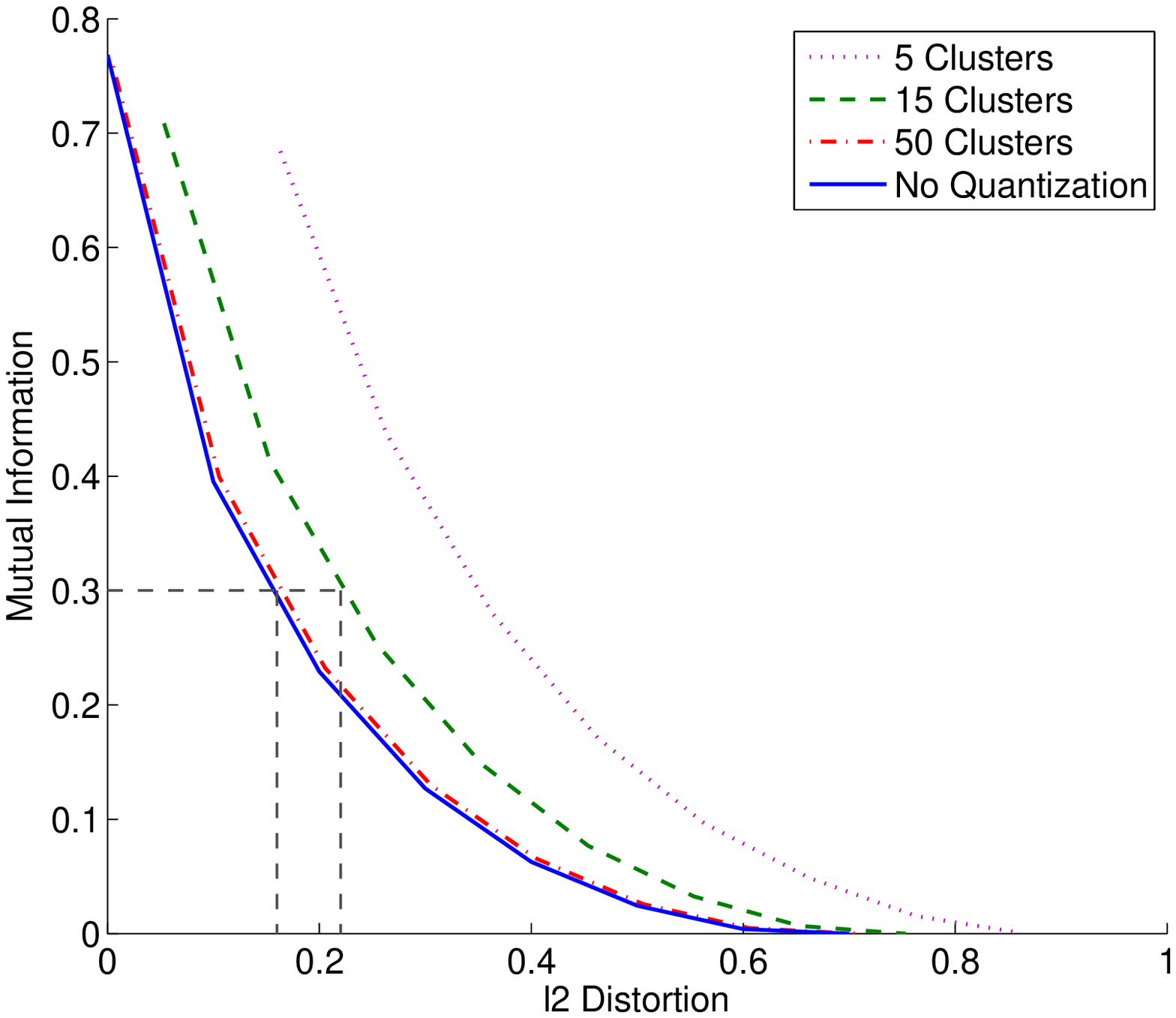}
\caption{Iris data: Privacy Distortion curves, one for each level of quantization.}
\label{fig:quantization}
\endminipage
\end{figure*}

Recall that the Iris dataset has a slightly mismatched prior and the privacy goal is to make it hard to classify the Iris Virginica flower correctly. Intuitively we can do this by blurring the distinction between the Virginica and Versicolour flowers; yet this  needs to be done without diminishing the ability to correctly classify the Setosa flower. This simple example will allow us to directly compare the effect of quantization by running both Algorithm-1 (without quantization) and Algorithm-2 (with quantization) on this data.

Let the private attribute be $a \in \{$\emph{not Virginica, Virginica}$\}$, and our observed behavior $B$ be the petal and sepal, length and width attributes.  Because these features are continuous, we estimate the probability density $p(A,B)$ using a gaussian kernel estimator, with bandwidth fitted through cross-validation, for each set of flower. We have therefore an estimate of the conditional densities $f_{B|A}$ for each $A$. We further sample the distribution to derive a discrete approximation of the joint density  $p_{A,B}$ that is needed for our convex optimization problem. In this case, we use the $\mathcal{L}_2$-norm as the distortion metric. This is a natural distortion metric for any optimizations seeking to minimize a squared error.

We tried different quantization granularities, obtained by applying the traditional $k$-means clustering method and changing the number of clusters. The privacy leakage versus distortion tradeoff is shown Fig.~\ref{fig:quantization}.  The curve labeled ``No Quantization" was obtained using Algorithm-1, whereas the others were generated using Algorithm-2 for different values of $k$ the number of clusters. We see that using 50 clusters is nearly identical to the behavior without quantization. This is a first confirmation that our quantization approach is sound. Furthermore, even quantization with 15 clusters achieves a privacy-distortion tradeoff that is quite similar to the tradeoff incurred when no quantization is used. This is very encouraging as it indicates that quantization does not penalize the mapping in any substantive way. For example, consider the mutual information of 0.03. For this level of privacy, we introduce an amount of distortion equal to 0.18. However if we introduce clustering to improve on computation complexity, then we need a somewhat larger distortion of 0.21 to achieve the same level of privacy. This is a small penalty since clustering reduces the number of input variables used in the privacy-accuracy optimization which, in turn, reduces the overall complexity of determining the optimal privacy mapping. This illustrates how clustering can make the privacy-distortion optimization significantly more tractable without incurring a large penalty in terms of distortion.

We show the privacy mapping in Fig.~\ref{fig:clusters}. Each point on these curves represents a cluster, and the arrow illustrates the suggested new value (i.e. distortion) that our  mapping determines. For low levels of distortion (top plot) the clusters on the left, containing the Iris Setosa flowers, are barely distorted. For the higher distortion level, all of the clusters are distorted. We can see that the distinction between the blue and green clusters has grown more blurred, while the Setosa flower clusters on the left still remain clearly separable from the others, allowing their correct classification.  This validates the fact that while we distort the data in order to provide privacy, it can still be used for some (approved) inference purposes.

\subsection{Mismatched prior and quantization on Politics-and-TV Data}
We demonstrate here a more realistic privacy preservation application over the Politics-and-TV dataset described earlier.

\psfrag{Political Distribution}[cc][cc][1][0]{{\scriptsize Ratings}}
\psfrag{Political Distribution for Shows Rated}[cc][cc][1][0]{{\scriptsize }}
\psfrag{D}[cc][cc][1][-90]{{\tiny D}}
\psfrag{R}[cc][cc][1][-90]{{\tiny R}}

\psfrag{A}[cr][cr][1][-60]{{\scriptsize Modern Family}}
\psfrag{B}[cr][cr][1][-60]{{\scriptsize Community}}


\psfrag{C}[cr][cr][1][-60]{{\scriptsize Glee}}
\psfrag{W}[cr][cr][1][-60]{{\scriptsize The Mentalist}}

\psfrag{E}[cr][cr][1][-60]{{\scriptsize The Daily Show}}
\psfrag{F}[cr][cr][1][-60]{{\scriptsize O'Reilly Factor}}

\psfrag{G}[cr][cr][1][-60]{{\scriptsize FOX News}}
\psfrag{H}[cr][cr][1][-60]{{\scriptsize NBC News}}

\psfrag{I}[cr][cr][1][-60]{{\scriptsize NBA}}
\psfrag{J}[cr][cr][1][-60]{{\scriptsize Nascar}}

\psfrag{Expected hamming distortion per rating}[tc][cc][1][0]{{\scriptsize Expected Hamming distortion per binarized rating}}
\psfrag{Expected $l_2$ distortion per rating}[tc][cc][1][0]{{\scriptsize Expected $l_2$ distortion per rating}}

\psfrag{TPR}[bc][tc][1][0]{{\scriptsize TPR}}
\psfrag{FPR}[tc][bc][1][0]{{\scriptsize FPR}}
\psfrag{Log Regression ROC}[cc][cc][1][0]{{\scriptsize }}
\psfrag{with ratings}[cl][cl][1][0]{{\scriptsize actual ratings}}
\psfrag{binary ratings}[cl][cl][1][0]{{\scriptsize binarized ratings}}
\psfrag{perturbed ratings, delta = 1}[cl][cl][1][0]{{\scriptsize distorted ratings $\Delta=1$}}
\psfrag{perturbed ratings, delta = 2}[cl][cl][1][0]{{\scriptsize distorted ratings $\Delta=2$}}

\begin{figure}[t]
\includegraphics[width = \linewidth]{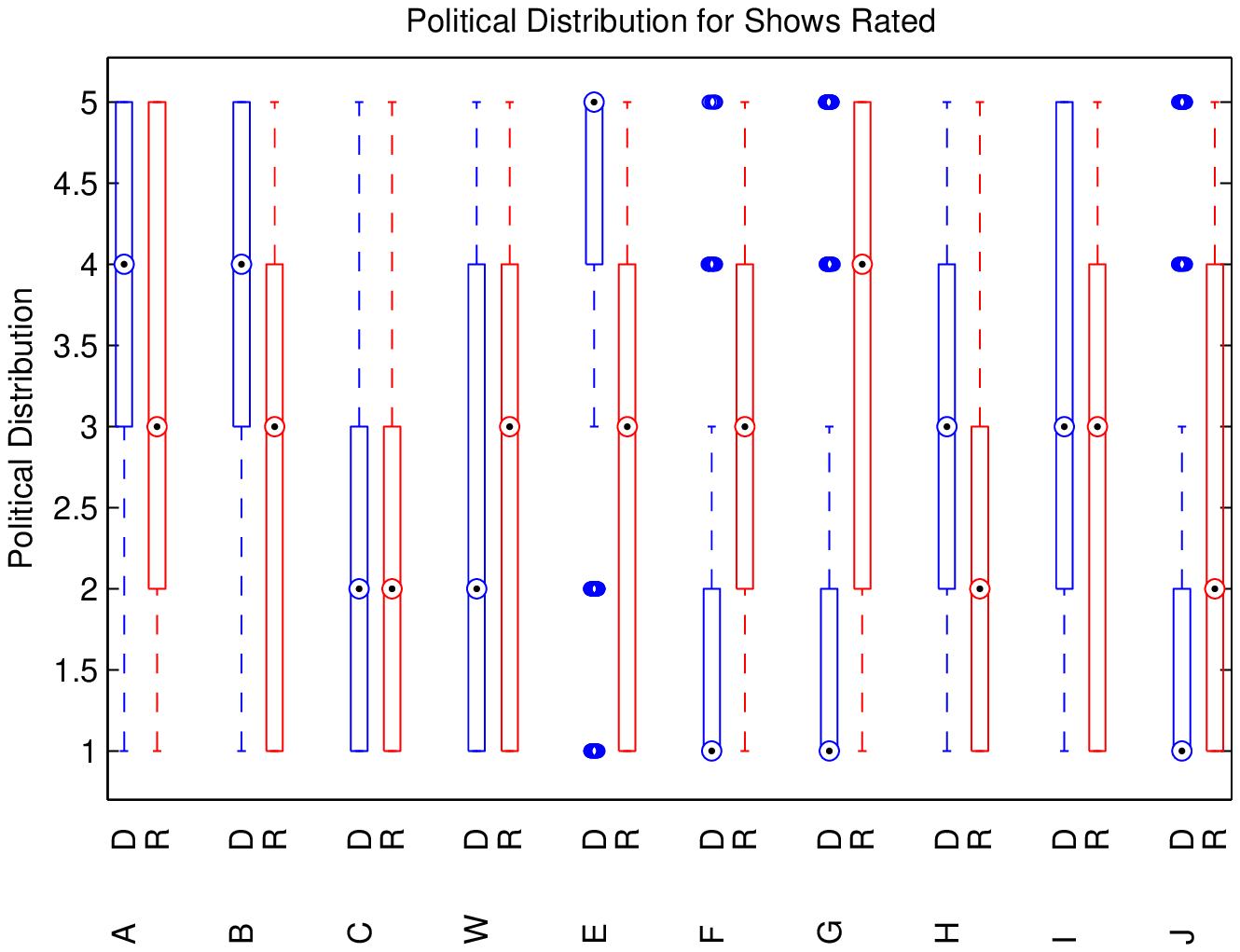}
\caption{Box plots of ratings for 12 TV shows by Democrats (D) and Republicans (R)}
\label{fig:politicsBox}
\end{figure}

\psfrag{Mutual Information}[bc][cc][1][0]{{\scriptsize Mutual Information}}
\psfrag{Expected hamming distortion per rating}[tc][cc][1][0]{{\scriptsize Expected Hamming distortion per rating}}
\psfrag{Expected L2 distortion per rating}[tc][cc][1][0]{{\scriptsize Expected L2 distortion per rating}}
\psfrag{TPR}[bc][tc][1][0]{{\scriptsize TPR}}
\psfrag{FPR}[tc][bc][1][0]{{\scriptsize FPR}}
\psfrag{Log Regression ROC}[cc][cc][1][0]{{\scriptsize }}

\begin{figure*}[t!]
\minipage{0.32\textwidth}
\includegraphics[width=\linewidth]{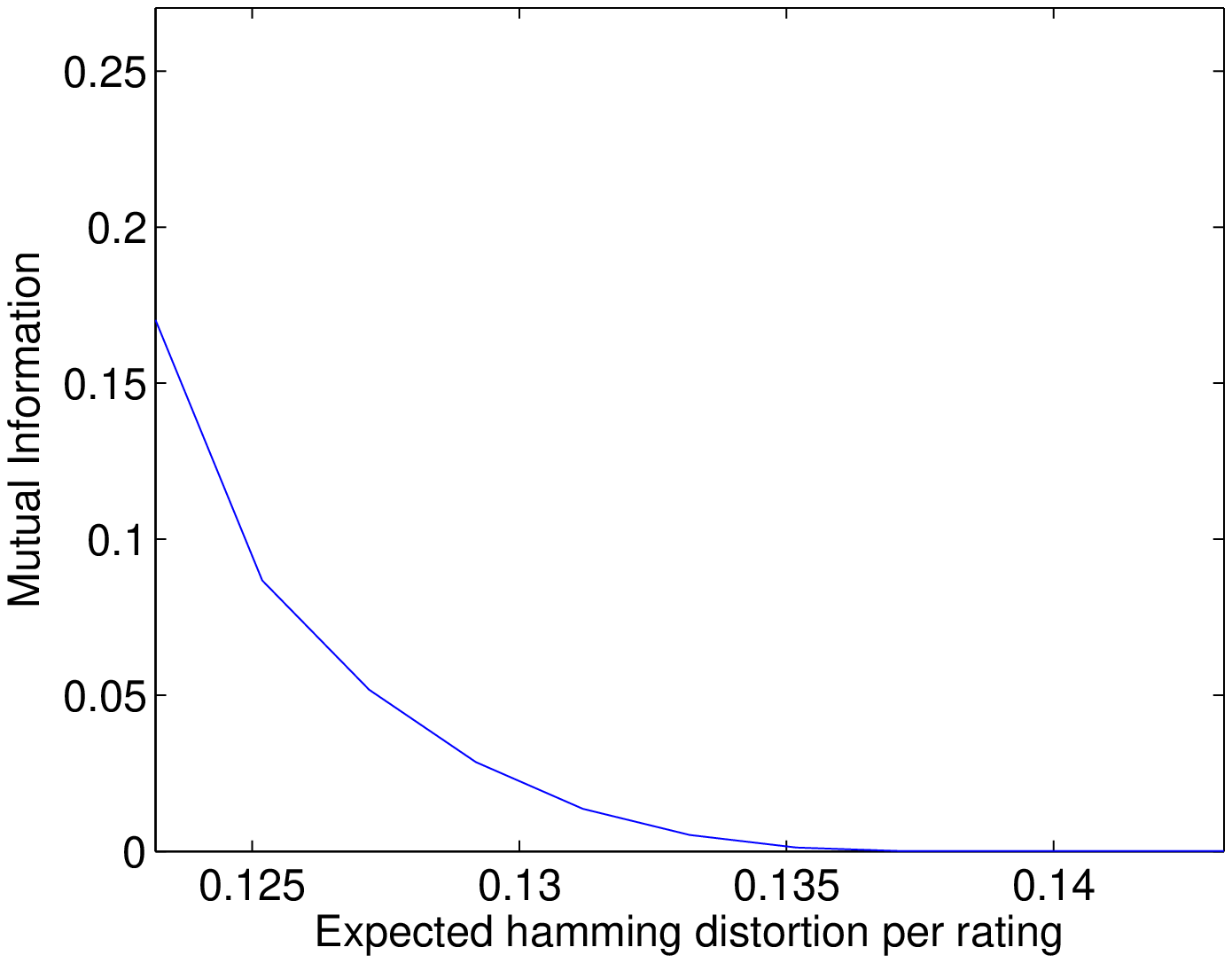}
\caption{Politics \& TV data: Privacy-accuracy trade-off on binarized ratings after quantization. Quantization introduces most of the distortion.}
\label{fig:privacyPolitics}
\endminipage\hfill
\minipage{0.32\textwidth}
\includegraphics[width=\linewidth]{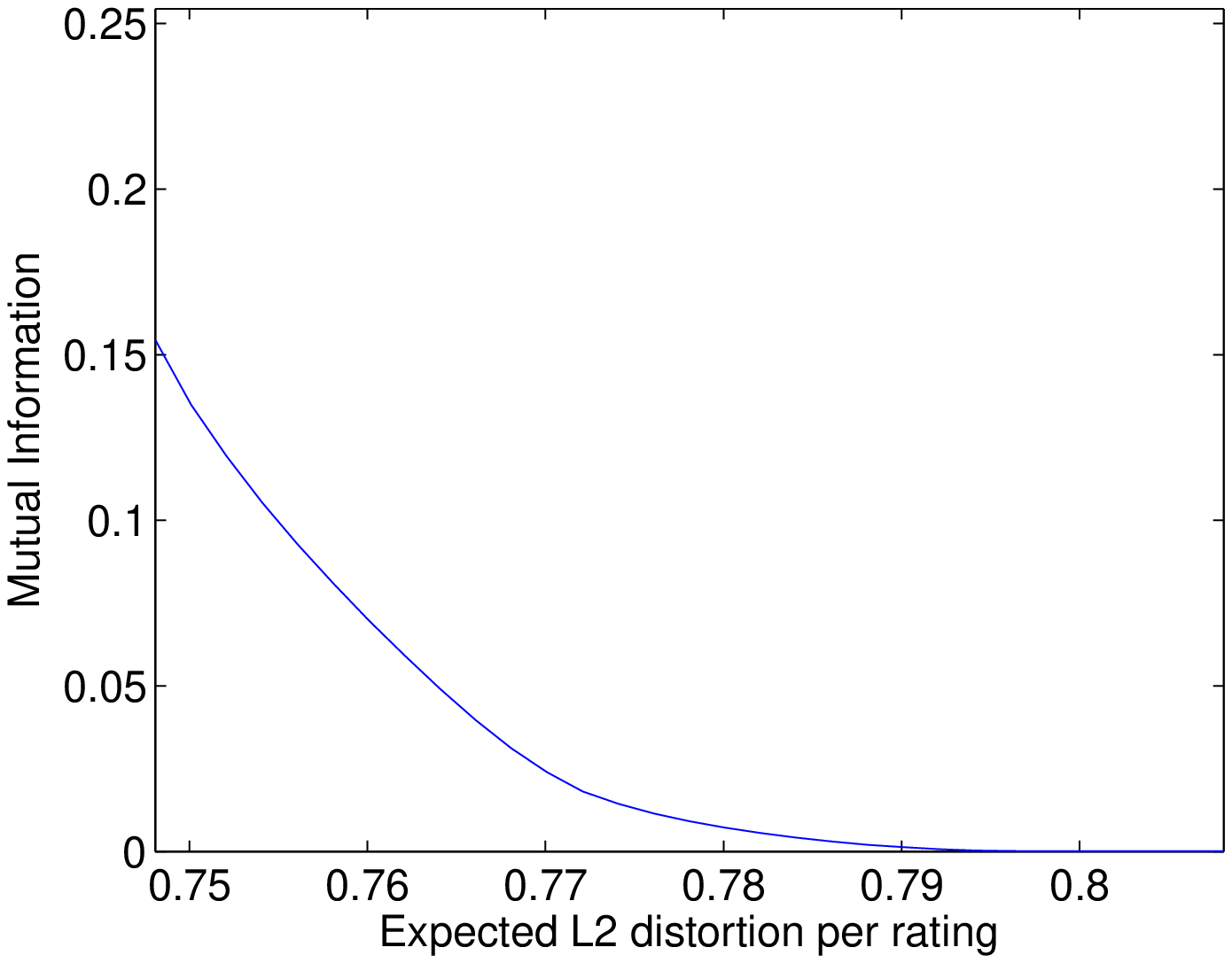}
\caption{Politics \& TV data: Privacy-accuracy trade-off on actual ratings after quantization.}
\label{fig:dist_MI}
\endminipage\hfill
\minipage{0.32\textwidth}
\includegraphics[width=\linewidth]{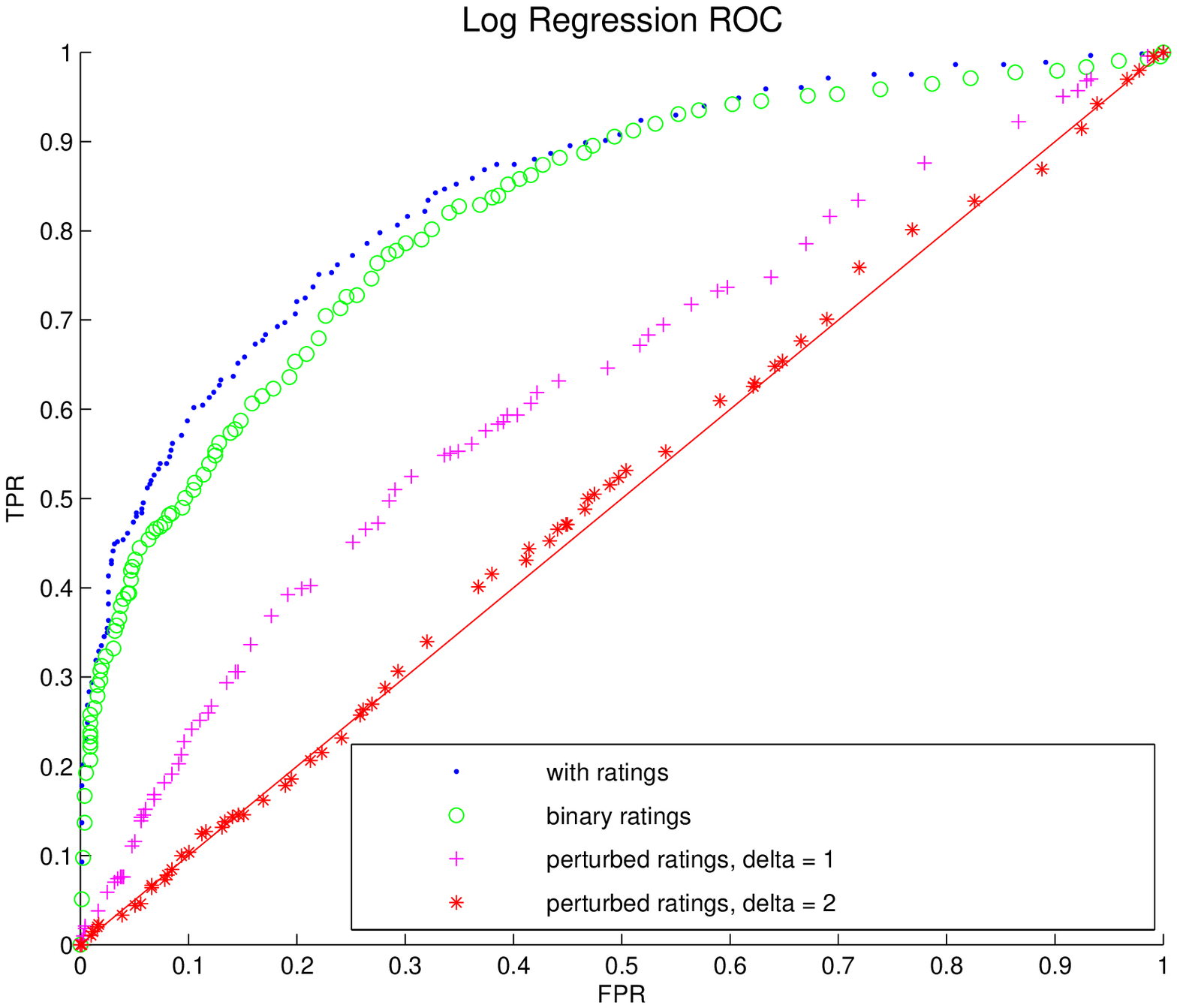}
\caption{Politics \& TV data: ROC curve of a logistic regression classifier for the political affiliation based on TV show ratings}
\label{fig:logregROC}
\endminipage
\end{figure*}

Consider the setting where a user wishes to release his TV show ratings $R\in\{0,1,\ldots,5\}^{50}$ (or $B\in\{0,1\}^{50}$), in the hope of getting good recommendations, but is concerned about them leaking information about his political affiliation $A\in\{\mathrm{Democrat},\mathrm{Republican}\}$. Note that although we focus on the case where the private data is a single variable representing political affiliation, the privacy-accuracy framework \cite{alerton2012privacy} can handle protecting a set of private variables, e.g. we could protect any subset of a user's three attributes [age,gender, politics]. The rating vector $R$ (reps. $B$) lives in a large alphabet of size $6^{50}$ (resp. $2^{50}$)\footnote{The number of survey samples is small relative to the size of the alphabet, and estimating the prior $p_{A,R}$ from the dataset may lead to a mismatched prior. We address the issue of the mismatched prior in \cite{Salamatian-Arxiv2013}.}. Solving (\ref{eq:convex opt}) over $6^{100}$ variables would be untractable, and justifies resorting to quantization. In this section, we first describe the privacy threat on political affiliation from the release of TV show ratings, then we characterize the privacy-accuracy trade-off under quantization. We illustrate the success of our privacy approach by showing how an inference algorithm degrades down to an uninformed guess at perfect privacy. Finally, we compare the quality of recommendations based on the actual user ratings versus the privatized ratings.

\noindent\textbf{Privacy threat}: 
The threat comes from the underlying existence of TV shows that are highly correlated with political affiliation, e.g. \emph{The Daily Show} is predominantly liked by Democrats, while \emph{Fox News} is preferred by Republicans. Fig. \ref{fig:politicsBox} shows boxplots of ratings for 12 shows---two shows from each genre in the dataset--- by Democrats and Republicans. 

Those shows for which there is little overlap in the opinions of Republicans and Democrats clearly demonstrate high correlation between political affiliation and opinion of those shows. Such shows have high discriminative power that inference algorithms can exploit. There exists a broad variety of shows in terms of their discriminative power - some are very much so, while others exhibit low correlation. Users who rate highly shows such as \emph{The O'Reilly Factor}, or \emph{The Daily Show}, may be facing a stronger threat than those who only watch and rate shows with little discriminating power. Broadly speaking, across our 50 shows, we found that roughly one third of them have strong correlation with political affiliation. 

In order to understand the threat inherent in this dataset, we quantify the potential privacy leakage using mutual information $I(A;R)$. 

To provide an illustrative example, we thus consider a reduced set of our data for which we can compute the mutual information. We consider the top 5 most seen TV shows, and use the binarized version of the rating vector with ratings in $\{0,1\}^{5}$. For this case, we observe that the mutual information between the observed features and the political orientation is already at $0.191$ bits. An adversary, with this information on hand (the 5 tuple of binarized ratings), could use a maximum a posteriori (MAP) detector and guess the political affiliation of somebody with an accuracy of $71\%$. Hence, the privacy threat is real. Note that because mutual information is a non-decreasing function, as we add additional shows, the threat  either stays the same or increases.

\noindent\textbf{Privacy-accuracy trade-off}: We now apply our quantization  approach and investigate its impact on the privacy-distortion tradeoff. We first consider the full dataset (all 50 shows) with the binarized version of the ratings. For this scenario we use an \emph{exchange-distortion} in which we exchange on TV show for another. We use Algorithm-1 that first quantizes the data using a clustering algorithm (k-means with a Hamming distance metric) into 25 clusters; then we apply the convex optimization on the quantized points.  The resulting trade-off curve is depicted in Fig. \ref{fig:privacyPolitics}. The curve shows that the quantization step alone introduces an average Hamming distortion of about $12\%$ (leftmost point on x-axis) per rating, or $6.1$ over all 50 shows, and results in a mutual information of $0.189$ on the representative points (cluster centers). As this is still high, we are motivated to apply further distortion. Fig. \ref{fig:privacyPolitics} shows that using the optimal privacy preserving scheme resulting from convex optimization, we can steadily decrease the privacy threat with increasing distortion. Not only is our privacy-distortion curve properly behaved, but small increases in Hamming distance bring the privacy leakage down quickly. Moreover, we can achieve perfect privacy ($I=0$) at the cost of an additional $3\%$ in average Hamming distortion (beyond the clustering distortion).  Perfect privacy is achieved at an overall Hamming Distortion of less than 7 out of 50; put alternatively, perfect privacy is obtainable if on average we change just less than 15\% of a user's rating data before it is released.

We next consider the same tradeoff using the version of our data with the actual ratings. We use k-means clustering with L2 distance, and the results are given in Fig \ref{fig:dist_MI}. We cannot calculate the original mutual information because we do not know the distribution of actual ratings (the number of unique rating vectors is too large compared to the size of our data set), but the mutual information after quantization is 0.182. There is now a much higher average quantization distortion of 37.5, or $0.75$ per rating, which can be accounted for by the fact that the range of values for each show is now 0-5 instead of 0 or 1. We see that using the actual ratings requires slightly higher distortion to reach perfect privacy than with binarized ratings. In this case, we are able to achieve perfect privacy with an extra L2 distortion of about $0.05$ per rating in average.

\noindent\textbf{Inference defeat}: The previous plots show the reduction in privacy leakage that is achieved by our distortion. Another key performance metric is to examine how much the accuracy of a Democrat/Republican classifier is reduced when distorted user ratings are used instead of the non-distorted ones. We consider the example of a logistic regression classifier to infer political affiliation (similar to the one used in \cite{Weinsberg-RecSys-2012} to infer gender from movie ratings). We used 10-fold cross validation on our full dataset, considered both cases of actual and binarized ratings, and a distortion that achieves perfect privacy ($I=0$). After perturbing the ratings to reach $I=0$, any inference algorithm cannot perform better than an uninformed guess. In Fig.~\ref{fig:logregROC} we plot the false positive rate, the number of Democrats falsely classified as Republicans, against the true positive rate, the number of Republicans who are correctly classified. With a distortion bound of $\Delta=1$, we see that we can significantly reduce the classifier's performance but not yet reach perfect privacy; however with $\Delta=2$ the classifier is reduced to nothing more than an uninformed  classifier. This demonstrates that our approach can indeed successfully render inference attempts useless.

Finally, note that logistic regression also performs almost equally well with binarized and actual ratings, which means merely perturbing existing ratings is not enough. The adversary can ignore the actual rating values,  consider only binarized ratings, and classify almost equally well on whether or not a user rated a show. Therefore, we must add and/or delete ratings to protect privacy.

\noindent\textbf{Recommendation quality}: As a final performance metric, we consider the impact of our distortion on the recommendations that would be produced by a recommender system based on matrix factorization. RMSE1 captures the root mean squared error in predicted ratings (compared to the true ratings) using unperturbed data $\hat{r}$, while RMSE2 captures the errors when ratings are predicted using the distorted data $\hat{\hat{r}}$ produced by our algorithm. The results are show in Table \ref{tab:RMSEs}, and were produced using 5-fold cross validation and randomly removing $10\%$ of the ratings in each test set. We can see that any additional errors in TV recommendations, introduced by using our distorted version of user ratings, is small. This preliminary result on the impact on a recommendation system is encouraging, yet requires further extensive testing.

\begin{table}[t!]
\caption{RMSEs of $|r-\hat{r}|$ and $|r-\hat{\hat{r}}|$}
\label{tab:RMSEs}
\centering
\begin{tabular}{c*{6}{p{.8cm}}}
\hline
Set & 1 & 2 & 3 & 4 & 5 \\
\hline\hline
RMSE1 & 1.2506 & 1.1820 & 1.2461 & 1.2155 & 1.2101 \\
\hline
RMSE2 & 1.6972 & 1.6763 & 1.6215 & 1.7248 & 1.8036 \\
\hline
\end{tabular}
\end{table}

\begin{figure*}[tb!]
\centering
\begin{tabular}{| c | l  l  l || l  l  l |}
\cline{2-7}
\multicolumn{1}{c}{} & \multicolumn{3}{| c | }{Initial Cluster} &\multicolumn{3}{| c |}{Privacy Mapping} \\ \hline
N & Family Guy & NFL & Dexter & Family Guy & NFL & Dexter \\ \hline
R & FOX News & NFL & O'Reilly Factor & Daily Show & NFL & Colbert Report \\ \hline
D & Daily Show & Colbert Report & Family Guy & NFL & Dexter & Family Guy  \\ \hline
D & Daily Show & Colbert Report & Dexter & Daily Show & Colbert Report & Dexter \\ \hline
N & Modern Family & The Big bang theory & CNN & Modern Family & The Big bang theory & CNN \\ \hline

\end{tabular}
\caption{Some privacy mappings from clusters to clusters. Each row is a cluster by the 3 most seen TV shows for people within that cluster. Initially, some cluster may be highly correlated with a political affiliation (denoted by D and R), or may be more neutral (denoted by N) in the sense that the distribution of democrats and republicans in the cluster is close to the base distribution in the dataset.}
\label{tab:politics}
\end{figure*}

\section{Related Work}\label{sec:relatedWork}
\subsection{Privacy}\label{sec:relatedPrivacy}

The prevalent notion of privacy adopted by the privacy research community is
differential privacy \cite{Dwork-McSherry-2006,dwork_differential_2006}.  In
broad terms, a query over a database is differentially private if small
variations in the entries of the database does not significantly change the
output distribution of the query. This guarantees that it is difficult to
distinguish ``neighboring'' inputs of the database based solely on the
observation of the output.

Differential privacy does not take into account the distribution of the entries
of the database. This makes the formulation mathematically tractable and
simplifies the implementation of differentially private systems. Moreover, differential
privacy is robust against arbitrary side information from the attacker (also called
background knowledge or auxiliary information), which is a property that our
mechanism cannot guarantee as such, even though our recent works seem to suggest great progress on defining the privacy-utility trade-off under side information. However,
differential privacy does not quantify the amount of information that is leaked
from the system. Furthermore, when inputs are correlated, guaranteeing
differential privacy does not necessarily guarantee \textit{de facto} privacy.
As shown in \cite{alerton2012privacy}, an
adversary might able to infer with arbitrarily high precision the input database
of a differentially private query for certain input distributions.

More general and flexible frameworks similar to differential privacy exist such as the Pufferfish framework \cite{Kifer:2012}. The basic idea behind this framework is to output a pair of mutually exclusive statements, such that the adversary does not know which, if either, of the two statements is true. This framework does not take into account or try to minimize distortion of the data, instead focusing on extracting the data that they wish to keep private, and ignoring utility preservation.  In our paper, we focus on the privacy-utility trade-off. 
We also assume, fairly rigorously, that the adversary has knowledge of the data generation process, and knows the same a priori distribution that we do (which is not necessarily the true distribution). The Pufferfish framework can accommodate any assumption about the adversary's knowledge of the a priori distribution, but also requires that we know what the adversary's belief of the distribution is, which is not knowledge we can assume.

Another existing trend in the privacy research community is to apply
information-theoretic tools to quantify and design privacy-preserving
mechanisms
\cite{evfimievski_limiting_2003,Reed-1973,Yamamoto-ITtrans1983,Sankar-Poor-IFStrans2013,
alerton2012privacy, rebollo-monedero_t-closeness-like_2010}. Information theory
provides a natural framework to measure the amount of private information that
an adversary can learn by observing a given user's public data. This was first
noted by Reed \cite{Reed-1973}, and has since appeared in different forms in the
information theory and privacy literature. One line of work, adopted
in \cite{Yamamoto-ITtrans1983,Sankar-Poor-IFStrans2013}, provides
asymptotic and fundamental limits for an adversary's average equivocation
of the private data as the number of data samples grows arbitrarily
large and characterize rate-distortion-equivocation regions.

Non-asymptotic approaches to information-theoretic privacy were discussed, for
example, in \cite{evfimievski_limiting_2003,
alerton2012privacy, rebollo-monedero_t-closeness-like_2010}. In
\cite{evfimievski_limiting_2003}, information-theoretic metrics were directly
applied to design privacy-preserving mechanisms without considering distortion
constraints. Afterwards,
\cite{rebollo-monedero_t-closeness-like_2010} presented a formulation for
designing privacy-preserving mechanisms similar to the ones found in
rate-distortion theory. More recently, \cite{alerton2012privacy} introduced a
general framework for privacy against statistical inference that takes into
account distortion constraints for the user's public data.

Information-theoretic approaches have also been used to quantify the information flow in security systems (e.g. \cite{hamadou_reconciling_2010} and the references therein). In this case, different information-theoretic metrics are used to quantify the change of an attackers belief of the input of a system given an observation of the output. These approaches, such as the one used in \cite{hamadou_reconciling_2010}, also take into account possible prior mismatches and extra knowledge that an attacker might have. Even though in this paper we also use information-theoretic metrics to quantify the change in the attacker's belief, our results are fundamentally different in what they seek to accomplish. Our main goal is not to simply quantify the adversarial threat, but create a practical framework that allows the design of privacy-preserving mechanisms that also maintain a certain level of utility of the data. Therefore, we simultaneously consider the utility of the data and the variation of the adversary's belief, instead of focusing solely on the information flow. \\ 

\subsection{Quantization}
\label{sec:related quantization}

Data quantization \cite{gray98quantization} are methods that reduce the size of datasets. In summary, all of these methods select $k$ representative examples from the set of $n$ examples, where $k \ll n$. The difference between the methods is in their objectives. One of the most popular methods is $k$-means clustering, which minimizes the mean squared error between the examples and their closest representative example \cite{gray98quantization}. Another popular metric is to minimize is the maximum distance between the example and its closest representative example. Online $k$-center clustering \cite{charikar97incremental} and cover trees \cite{beygelzimer06cover} find nearly optimal solutions to this problem.

\section{Conclusion}\label{sec:Conclusion}

Privacy attacks are receiving more and more attention, both from a theoretical perspective, and from a practical point of view. The amount of information shared everyday, and the recent improvements in inference models have brought in the attention of all, the urge for effective yet private systems. This fundamental contradiction is the core of the privacy problem. In this paper, we show a practical approach to privacy that has roots in a deep and strong theoretical framework. We show that is possible to have private systems by adding a layer of privacy, without changing the way the data is processed afterwards, or its purpose. Using techniques from different fields, such as rate distortion theory, convex optimization, estimation ,and quantization, we address some challenges introduced by the diversity and complexity of real world data. Namely we show that a mismatched prior estimation does not hurt too much in terms of distortion and privacy leakage. Moreover, we propose a generic methodology to deal with big data through quantization. We show that the error in distortion grows linearly in the quantization error, and that the privacy leakage is identical.

\bibliographystyle{IEEEtran}
\bibliography{IEEEabrv,References}

\begin{thebibliography}{10}
\providecommand{\url}[1]{#1}
\csname url@samestyle\endcsname
\providecommand{\newblock}{\relax}
\providecommand{\bibinfo}[2]{#2}
\providecommand{\BIBentrySTDinterwordspacing}{\spaceskip=0pt\relax}
\providecommand{\BIBentryALTinterwordstretchfactor}{4}
\providecommand{\BIBentryALTinterwordspacing}{\spaceskip=\fontdimen2\font plus
\BIBentryALTinterwordstretchfactor\fontdimen3\font minus
  \fontdimen4\font\relax}
\providecommand{\BIBforeignlanguage}[2]{{%
\expandafter\ifx\csname l@#1\endcsname\relax
\typeout{** WARNING: IEEEtran.bst: No hyphenation pattern has been}%
\typeout{** loaded for the language `#1'. Using the pattern for}%
\typeout{** the default language instead.}%
\else
\language=\csname l@#1\endcsname
\fi
#2}}
\providecommand{\BIBdecl}{\relax}
\BIBdecl

\bibitem{cyberbullying}
\url{http://www.cyberbullying.us/}.

\bibitem{suicide}
{New York Times},
  \url{http://www.nytimes.com/2010/09/30/nyregion/30suicide.html}.

\bibitem{reputation}
\url{http://www.reputation.com/}.

\bibitem{googlesuit1}
\url{http://www.upi.com/Business_News/2011/07/14/Privacy-lawsuit-against-Google-grows/UPI-59711310676116/},
  July 2011.

\bibitem{googlebuzz}
\url{http://blog.sfgate.com/techchron/2010/02/17/local-class-action-complaint-filed-over-google-buzz/},
  2010.

\bibitem{facebooksuit1}
\url{http://epic.org/privacy/socialnet/}, 2011.

\bibitem{fbtracking}
``{Facebook's Exploding Tracking Scandals},''
  \url{http://www.possesocialmedia.com/2011/10/10/}, 2011.

\bibitem{europe-vs-facebook}
``{Europe Versus Facebook},'' { \url{http://europe-v-facebook.org/EN/en.html}}.

\bibitem{FBcost}
``{Facebook will change ad service to settle lawsuit},'' {\url{
  http://in.reuters.com/article/2012/06/21/us-facebook-settlement-idINBRE85K19G20120621}},
  2012.

\bibitem{GoogleFine}
``{FTC: Google To Pay Record Fine Over Safari Privacy Violation},''
  {\url{http://www.huffingtonpost.com/2012/08/09/ftc-google-fine-safari-privacy-violation_n_1760281.html}},
  2012.

\bibitem{alerton2012privacy}
\BIBentryALTinterwordspacing
F.~Calmon and N.~Fawaz, ``Privacy against statistical inference,'' in
  \emph{Proceedings of the 50th Annual Allerton Conference on Communication,
  Control, and Computing, Allerton 2012}, 2012. [Online]. Available:
  \url{http://arxiv.org/abs/1210.2123}
\BIBentrySTDinterwordspacing

\bibitem{salamatian2013globalsip}
S.~Salamatian, A.~Zhang, F.~du~Pin~Calmon, S.~Bhamidipati, N.~Fawaz, B.~Kveton,
  P.~Oliveira, and N.~Taft, ``How to hide the elephant- or the donkey- in the
  room: Practical privacy against statistical inference for large data,'' in
  \emph{IEEE GlobalSIP}, 2013.

\bibitem{Experian-survey}
``{Simmons Consumer Segmentations: PublicPersonas},''
  {\url{http://www.experian.com/simmons-research/simmons-consumer-research.html}},
  2012.

\bibitem{Cover}
T.~M. Cover and J.~A. Thomas, \emph{Elements of information theory}.\hskip 1em
  plus 0.5em minus 0.4em\relax New York, NY, USA: Wiley-Interscience, 1991.

\bibitem{gray98quantization}
R.~Gray and D.~Neuhoff, ``Quantization,'' \emph{IEEE Transactions on
  Information Theory}, vol.~44, no.~6, pp. 2325--2383, 1998.

\bibitem{charikar97incremental}
M.~Charikar, C.~Chekuri, T.~Feder, and R.~Motwani, ``Incremental clustering and
  dynamic information retrieval,'' in \emph{Proceedings of the 29th Annual ACM
  Symposium on Theory of Computing}, 1997, pp. 626--635.

\bibitem{beygelzimer06cover}
A.~Beygelzimer, S.~Kakade, and J.~Langford, ``Cover trees for nearest
  neighbor,'' in \emph{Proceedings of the 23rd International Conference on
  Machine Learning}, 2006, pp. 97--104.

\bibitem{censusdata}
\BIBentryALTinterwordspacing
``{Census Income Data Set}.'' [Online]. Available:
  \url{http://archive.ics.uci.edu/ml/datasets/Census+Income}
\BIBentrySTDinterwordspacing

\bibitem{irisdata}
\BIBentryALTinterwordspacing
``{Iris Data Set}.'' [Online]. Available:
  \url{http://archive.ics.uci.edu/ml/datasets/Iris}
\BIBentrySTDinterwordspacing

\bibitem{ucimlrepository}
\BIBentryALTinterwordspacing
A.~Asuncion and D.~Newman, ``{UCI} machine learning repository,'' 2007.
  [Online]. Available:
  \url{http://www.ics.uci.edu/$\sim$mlearn/{MLR}epository.html}
\BIBentrySTDinterwordspacing

\bibitem{buzztvsurvey}
``{What Your Favorite TV Shows And Networks Say About Your Politics},''
  {\url{http://www.buzzfeed.com/rubycramer/what-your-favorite-tv-shows-say-about-your-politic}},
  2012.

\bibitem{Weinsberg-RecSys-2012}
{U. Weinsberg and S. Bhagat and S. Ioannidis and N. Taft}, ``{ BlurMe:
  Inferring and Obfuscating User Gender Based on Ratings},'' in \emph{{ACM
  Conference on Recommender Systems (RecSys)}}, September 2012.

\bibitem{Dwork-McSherry-2006}
\BIBentryALTinterwordspacing
C.~Dwork, F.~Mcsherry, K.~Nissim, and A.~Smith, ``Calibrating noise to
  sensitivity in private data analysis,'' in \emph{TCC}, 2006. [Online].
  Available:
  \url{http://www.cs.bgu.ac.il/\~{}kobbi/papers/sensitivity-tcc-final.pdf}
\BIBentrySTDinterwordspacing

\bibitem{dwork_differential_2006}
C.~Dwork, ``Differential privacy,'' in \emph{Automata, Languages and
  Programming}.\hskip 1em plus 0.5em minus 0.4em\relax Springer, 2006, vol.
  4052, pp. 1--12.

\bibitem{Kifer:2012}
\BIBentryALTinterwordspacing
D.~Kifer and A.~Machanavajjhala, ``A rigorous and customizable framework for
  privacy,'' in \emph{Proceedings of the 31st symposium on Principles of
  Database Systems}, ser. PODS '12.\hskip 1em plus 0.5em minus 0.4em\relax New
  York, NY, USA: ACM, 2012, pp. 77--88. [Online]. Available:
  \url{http://doi.acm.org/10.1145/2213556.2213571}
\BIBentrySTDinterwordspacing

\bibitem{evfimievski_limiting_2003}
A.~Evfimievski, J.~Gehrke, and R.~Srikant, ``Limiting privacy breaches in
  privacy preserving data mining,'' in \emph{Proceedings of the twenty-second
  {ACM} Symposium on Principles of Database Systems}, New York, {NY}, {USA},
  2003, pp. 211--222.

\bibitem{Reed-1973}
I.~S. {Reed}, ``{Information Theory and Privacy in Data Banks},'' in
  \emph{Proceedings of the June 4-8, 1973, national computer conference and
  exposition}, ser. AFIPS '73.\hskip 1em plus 0.5em minus 0.4em\relax ACM,
  1973, pp. 581--587.

\bibitem{Yamamoto-ITtrans1983}
H.~Yamamoto, ``A source coding problem for sources with additional outputs to
  keep secret from the receiver of wiretappers,'' \emph{{IEEE} Trans. Inf.
  Theory}, vol.~29, no.~6, 1983.

\bibitem{Sankar-Poor-IFStrans2013}
\BIBentryALTinterwordspacing
L.~{Sankar}, S.~R. {Rajagopalan}, and H.~V. {Poor}, ``Utility-privacy tradeoff
  in databases: An information-theoretic approach,'' \emph{{IEEE} Trans. Inf.
  Forensics Security}, 2013. [Online]. Available:
  \url{http://arxiv.org/abs/1102.3751}
\BIBentrySTDinterwordspacing

\bibitem{rebollo-monedero_t-closeness-like_2010}
D.~Rebollo-Monedero, J.~Forn{\'e}, and J.~Domingo-Ferrer, ``From
  t-closeness-like privacy to postrandomization via information theory,''
  \emph{{IEEE} Transactions on Knowledge and Data Engineering}, vol.~22,
  no.~11, pp. 1623 --1636, Nov. 2010.

\bibitem{hamadou_reconciling_2010}
S.~Hamadou, V.~Sassone, and C.~Palamidessi, ``Reconciling belief and
  vulnerability in information flow,'' in \emph{2010 {IEEE} Symposium on
  Security and Privacy ({SP)}}, 2010, pp. 79--92.

\end{thebibliography}

\end{document}